% Draft for structured distribution testing paper with Daniel and Vladimir

\documentclass[11pt]{article}

\def\colorful{0}

\oddsidemargin=-0.1in \evensidemargin=-0.1in \topmargin=-.5in
\textheight=9in \textwidth=6.5in
\parindent=18pt

\usepackage{amsthm,amsfonts,amsmath,amssymb,epsfig,color,float,graphicx,verbatim}
\usepackage{multirow}

\newif\ifhyper\IfFileExists{hyperref.sty}{\hypertrue}{\hyperfalse}
%\hyperfalse
\hypertrue
\ifhyper\usepackage{hyperref}\fi

\usepackage{enumitem}
\usepackage[capitalize]{cleveref} % by siuon: apparently must be loaded before algorithm

\makeatletter
\renewcommand{\subsubsection}{\@startsection{subsubsection}{3}{0pt}{-12pt}{-5pt}{\normalsize\bf}}
\makeatother

\usepackage{framed}
\usepackage{nicefrac}

\def\nnewcolor{1}
\ifnum\nnewcolor=1

\fi
\ifnum\nnewcolor=0

\fi

\ifnum\colorful=1
\newcommand{\new}[1]{{\color{red} #1}}

\else
\newcommand{\new}[1]{{#1}}

\fi

\newtheorem{theorem}{Theorem}

\newtheorem{lemma}[theorem]{Lemma}
\newtheorem{proposition}[theorem]{Proposition}
\newtheorem{corollary}[theorem]{Corollary}

\theoremstyle{definition}
\newtheorem{definition}[theorem]{Definition}

\newcommand{\R}{\mathbb{R}}
\newcommand{\Z}{\mathbb{Z}}
\newcommand{\E}{\mathbb{E}}

\newcommand{\dk}{d_{\mathrm K}}

\newcommand{\var}{\text{Var}}

% added  by Ilias

\newcommand{\depth}{\mathbf{depth}}
\newcommand{\width}{\mathbf{width}}
\newcommand{\discr}{\mathbf{Discr}}

\newcommand{\ignore}[1]{}

\newcommand{\eps}{\varepsilon}
%\newcommand{\vareps}{\varepsilon}

%\newcommand{\norm}[1]{\left\|#1\right\|}

%\newcommand{\comment}[1]{\marginpar{\color{red}\footnotesize #1}}

%added by Siu-on

\newcommand{\Var}{\mathop{\textnormal{Var}}\nolimits}

\newcommand{\Poi}{\mathop{\textnormal{Poi}}\nolimits}

\renewcommand{\eqref}[1]{Eq.~(\ref{#1})}

\newcommand{\eqdef}{\stackrel{{\mathrm {\footnotesize def}}}{=}}

%added by Ilias
%\newcommand{\littlesum}{\mathop{\textstyle \sum}}

%\newtheorem{fact}[theorem]{Fact}

\newenvironment{algorithm}[1][\  ] %
{ \rm
\begin{tabbing}
....\=.....\=.....\=.....\=.....\=  \+ \kill
} %
{\end{tabbing} }

\floatstyle{ruled}
\newfloat{Algorithm}{H}{alg}
\newfloat{Subroutine}{H}{sub}

{
\begin{minipage}{1.0\linewidth} \begin{algorithm} %
} { \end{algorithm} \end{minipage} }

\title{Testing Identity of Structured Distributions}

\author{
Ilias Diakonikolas\thanks{Supported by EPSRC grant EP/L021749/1, a Marie Curie Career Integration Grant, and a SICSA grant.}\\
University of Edinburgh\\
{\tt ilias.d@ed.ac.uk}.\\
\and
Daniel M. Kane\thanks{Supported in part by an NSF Postdoctoral Fellowship.}\\
University of California, San Diego\\
{\tt dakane@cs.ucsd.edu}.\\
\and
Vladimir Nikishkin\thanks{Supported by a University of Edinburgh PCD Scholarship.}\\
University of Edinburgh\\
{\tt v.nikishkin@sms.ed.ac.uk}.
}

\begin{document}

\maketitle

\begin{abstract}
We study the question of identity testing for structured distributions.
More precisely, given samples from a {\em structured} distribution $q$ over $[n]$ and an explicit distribution $p$ over $[n]$,
we wish to distinguish whether $q=p$ versus $q$ is at least $\eps$-far from $p$,
in $L_1$  distance. In this work, we present a unified approach that yields new, simple testers, with sample complexity
that is information-theoretically optimal, for broad classes of structured distributions, including $t$-flat distributions,
$t$-modal distributions, log-concave distributions, monotone hazard rate (MHR) distributions, and mixtures thereof.
% Say sth about approach and techniques?
\end{abstract}

%\thispagestyle{empty}
%\setcounter{page}{0}

%\newpage

\section{Introduction}  \label{sec:intro}

How many samples do we need to verify the identity of a distribution?
This is arguably {\em the} single most fundamental question in statistical hypothesis testing~\cite{NeymanP}, with
Pearson's chi-squared test~\cite{Pearson1900} (and variants thereof) still being the method of choice used in practice.
This question has also been extensively studied by the TCS community
in the framework of {\em property testing}~\cite{RS96, GGR98}:
Given sample access to an unknown distribution $q$ over a finite domain $[n]: = \{1, \ldots, n\}$, an explicit distribution
$p$ over $[n]$, and a parameter $\eps>0$, we want to distinguish between the cases that $q$ and $p$ are identical versus
$\eps$-far from each other in $L_1$ norm (statistical distance).
Previous work on this problem focused on characterizing the sample size needed to test the identity of an arbitrary
distribution of a given support size. After more than a decade of study, this ``worst-case'' regime
is well-understood: there exists a computationally efficient estimator with sample complexity $O(\sqrt{n}/\eps^2)$~\cite{VV14}
and a matching information-theoretic lower bound~\cite{Paninski:08}.

While it is certainly a significant improvement over naive approaches and is tight in general,
the bound of $\Theta(\sqrt{n})$ is still impractical, if the support size $n$ is very large.
We emphasize that the aforementioned sample complexity characterizes worst-case instances,
and one might hope that drastically better results can be obtained for most natural settings.
In contrast to this setting, in which we assume
nothing about the structure of the unknown distribution $q$, in many cases we know a priori
that the distribution $q$ in question has some ``nice structure''.
For example, we may have some qualitative information about the density $q$, e.g., it may be a mixture of a small number
of log-concave distributions, or a multi-modal distribution with a bounded number of modes. The following question
naturally arises:
{\em Can we exploit the underlying structure in order to perform the desired statistical estimation task more efficiently?}

One would optimistically hope for the answer to the above question to be ``YES.''
While this has been confirmed in several cases for the problem of {\em learning} %structured distributions
(see e.g., ~\cite{DDS12soda, DDS12stoc, DDOST13focs, CDSS14}),
relatively little work has been done for {\em testing} properties of structured distributions.
In this paper, we show that this is indeed the case for the aforementioned problem of identity testing
for a broad spectrum of natural and well-studied distribution classes. To describe our results in more detail, we will need
some terminology.

Let $C$ be a class of distributions over $[n]$. The problem of {\em identity testing for $C$} is the following:
Given sample access to an unknown distribution $q \in C$, and an explicit distribution $p \in C$\footnote{It is no loss of generality to assume that $p \in C$; otherwise the tester can output ``NO'' without drawing samples.},
we want to distinguish between the case that $q = p$ versus $\|q-p\|_1 \ge \eps.$ We emphasize that the sample complexity of this testing
problem depends on the underlying class
$C$, and we believe it is of fundamental interest to obtain efficient algorithms that are {\em sample optimal} for $C$.
One approach to solve this problem is to learn $q$ up to $L_1$ distance $\eps/2$ and
check that the hypothesis is $\eps/2$-close to $p$. Thus, the sample complexity of identity testing for $C$
is bounded from above by the sample complexity of {\em learning} (an arbitrary distribution in) $C$.
It is natural to ask whether a better sample size bound could be achieved for the identity testing
problem, since this task is, in some sense,  less demanding than the task of learning.

In this work, we provide a comprehensive picture of the sample and computational complexities of identity testing for a broad class of structured distributions.
More specifically, we propose a unified framework that yields new, simple, and {\em provably optimal} identity testers for various structured classes $C$; see Table~\ref{table:results} for an indicative list of distribution classes to which our framework applies. {\em Our approach relies on a single unified algorithm that we design, which yields
highly efficient identity testers for many shape restricted classes of distributions.}

As an interesting byproduct, we establish that, for various structured classes $C$, identity testing for $C$ is provably easier than learning. In particular, the sample bounds in the third column of Table~\ref{table:results} from~\cite{CDSS14} also apply for
{\em learning} the corresponding class $C$, and are known to be information-theoretically optimal for the learning problem.

Our main result (see Theorem~\ref{thm:main} and Proposition~\ref{prop:simple} in Section~\ref{sec:results}) can be phrased, roughly,  as follows:
{\em Let $C$ be a class of univariate distributions such that any pair of distributions $p, q \in C$ have ``essentially'' at most $k$ crossings, that is,
points of the domain where $q-p$ changes its sign. Then, the identity problem for $C$ can be solved with $O(\sqrt{k}/\eps^2)$ samples.
Moreover, this bound is information-theoretically optimal.}

By the term ``essentially'' we mean that a constant fraction of the contribution to $\|q-p\|_1$ is due
to a set of $k$ crossings -- the actual number of crossings can be arbitrary.
For example, if $C$ is the class of $t$-piecewise constant distributions, it is clear that any two distributions in $C$ have $O(t)$ crossings,
which gives us the first line of Table~\ref{table:results}. As a more interesting example, consider the class $C$ of log-concave distributions over $[n]$.
While the number of crossings between $p, q \in C$ can be $\Omega(n)$, it can be shown (see Lemma~17 in~\cite{CDSS14}) that the essential
number of crossings is $k = \widetilde{O} (1/\sqrt{\eps})$, which gives us the third line of the table. More generally,
we obtain asymptotic improvements over the standard $O(\sqrt{n}/\eps^2)$ bound for any class $C$ such that
the essential number of crossings is $k = o(n)$. This condition applies for any class $C$ that can be well-approximated
in $L_1$ distance by piecewise low-degree polynomials (see Corollary~\ref{cor:approx} for a precise statement).

%Finally, it is instructive to note that for $C$ the set of all distributions over $[n]$, the number of essential crossings is $k = \Omega(n)$, as expected.
%(Consider, e.g.,  $p$ to be the uniform distribution and $q$ to be uniform on a random $n/2$-size subset of the domain.)

\begin{table*}[t] \label{table:results}
\begin{center}
\begin{tabular}{|c|c|c|c|}%
\hline \bf Class of Distributions over $[n]$ & \bf Our upper bound & \bf Previous work
\\\hline\hline

$t$-piecewise constant & $O(\sqrt{t}/\eps^2)$ & $O(t/\eps^2)$ \cite{CDSS14} \\ \hline

$t$-piecewise degree-$d$ polynomial & $O\left(\sqrt{t(d+1)}/\eps^2\right)$ &  $O\left(t(d+1)/\eps^2\right)$  \cite{CDSS14}
\\\hline

log-concave & $\widetilde{O}(1/\eps^{9/4})$ & $\widetilde{O}(1/\eps^{5/2})$ \cite{CDSS14} \\ \hline

$k$-mixture of log-concave & $\sqrt{k} \cdot \widetilde{O}(1/\eps^{9/4})$  & $\widetilde{O}(k/\eps^{5/2})$ \cite{CDSS14}  \\ \hline

$t$-modal & $O(\sqrt{t \log(n)}/\eps^{5/2})$ &  $O\left(\sqrt{t \log(n)}/\eps^{3}+t^2/\eps^4\right)$ \cite{DDSVV13}   \\\hline

$k$-mixture of $t$-modal  & $O(\sqrt{k t \log(n)}/\eps^{5/2})$  &  $O\left(\sqrt{k t \log(n)}/\eps^{3}+k^2t^2/\eps^4\right)$ \cite{DDSVV13}  \\\hline

monotone hazard rate (MHR) & $O(\sqrt{\log(n/\eps)}/\eps^{5/2})$ &  $O(\log(n/\eps)/\eps^{3})$  \cite{CDSS14} \\\hline

$k$-mixture of MHR & $O(\sqrt{k \log(n/\eps)}/\eps^{5/2})$ &  $O(k \log(n/\eps)/\eps^{3})$  \cite{CDSS14} \\\hline
\end{tabular}
\end{center}
\vspace{-0.2cm}
\caption{Algorithmic results for identity testing of various
classes of probability distributions.
The second column indicates the sample complexity of our general algorithm applied to the class under consideration.
The third column indicates the sample complexity of the best previously known algorithm for the same problem.
}
\label{tab:results}
\vspace{-0.3cm}
\end{table*}
%Before we formally state our results, we start by providing some related literature.

\subsection{Related and Prior Work} \label{ssec:literature} In this subsection we review the related literature and compare
our results with previous work.

\smallskip

\noindent {\bf Distribution Property Testing} The area of distribution property testing, initiated in the TCS community by the work of Batu {\em et al.}~\cite{BFR+:00, Batu13},
has developed into a very active research area with intimate connections to information theory, learning and statistics.
The paradigmatic algorithmic problem in this area is the following: given sample access to an unknown distribution $q$ over an
$n$-element set, we want to determine whether $q$ has some property or is ``far''
(in statistical distance or, equivalently, $L_1$ norm) from any distribution having the property.
The overarching goal is to obtain a computationally efficient algorithm that uses as few samples as possible --
certainly asymptotically fewer than the support size $n$, and ideally much less than that.
See~\cite{GR00, BFR+:00, BFFKRW:01, Batu01, BDKR:02, BKR:04,  Paninski:08, PV11sicomp, ValiantValiant:11, DDSVV13, DJOP11, LRR11, ILR12} for a sample of works
and~\cite{Rub12} for a survey.

One of the first problems studied in this line of work is that of ``identity testing against a known distribution'': Given samples from an unknown distribution
$q$ and an explicitly given distribution $p$ distinguish between the case that $q = p$ versus the case that $q$ is $\eps$-far from $p$ in $L_1$ norm.
The problem of {\em uniformity testing} -- the special case of identity testing when $p$ is the uniform distribution -- was first considered by Goldreich and Ron~\cite{GR00} who,
motivated by a connection to testing expansion in graphs, obtained a uniformity tester using $O(\sqrt{n}/\eps^4)$ samples.
Subsequently, Paninski gave the tight bound of $\Theta(\sqrt{n}/\eps^2)$~\cite{Paninski:08} for this problem.
Batu {\em et al.}~\cite{BFFKRW:01} obtained an identity testing algorithm against an arbitrary explicit distribution
with sample complexity $\tilde{O}(\sqrt{n}/\eps^4)$. The tight bound of $\Theta(\sqrt{n}/\eps^2)$ for the general identity testing problem was given only recently in~\cite{VV14}.

\smallskip

\noindent {\bf Shape Restricted Statistical Estimation} The area of inference under shape constraints --  that is, inference about a probability distribution
under the constraint that its probability density function (pdf) satisfies certain qualitative properties --
is a classical topic in statistics starting with the pioneering work of Grenander~\cite{Grenander:56}
on monotone distributions (see \cite{BBBB:72} for an early book on the topic).
Various structural restrictions have been studied in the statistics literature, starting from
monotonicity, unimodality, and concavity~\cite{Grenander:56, Brunk:58, PrakasaRao:69, Wegman:70, HansonP:76, Groeneboom:85, Birge:87, Birge:87b,
Fougeres:97,ChanTong:04,JW:09},
and more recently focusing on structural restrictions such as log-concavity and $k$-monotonicity
\cite{BW07aos, DumbgenRufibach:09, BRW:09aos, GW09sc, BW10sn, KoenkerM:10aos}.

Shape restricted inference is well-motivated in its own right, and
has seen a recent surge of research activity in the statistics community, in part due to the ubiquity of structured distributions
in the natural sciences. Such structural constraints on the underlying distributions
are sometimes direct consequences of the studied application problem (see e.g., Hampel~\cite{Hampel87},
or Wang {\em et al.}~\cite{Wang05}), or they are a plausible explanation of the model under investigation (see e.g., ~\cite{Reb05aos} and references therein for
applications to economics and reliability theory). We also point the reader to the recent survey
~\cite{Walther09} highlighting the importance of log-concavity in statistical inference.
The hope is that, under such structural constraints,  the quality of the resulting estimators may dramatically improve, both in terms
of sample size and in terms of computational efficiency.

We remark that the statistics literature on the topic has focused primarily on the problem of {\em density estimation} or learning an unknown structured distribution. That is,
given samples from a distribution $q$ promised to belong to some distribution class $C$, we would like to output
a hypothesis distribution that is a good approximation to $q$. In recent years, there has been a flurry of results in the TCS community
on learning structured distributions, with a focus on both sample complexity and computational complexity, see~\cite{KMR+:94, FOS:05focs, BelkinSinha:10, KMV:10, MoitraValiant:10,DDS12soda,DDS12stoc, CDSS13, DDOST13focs, CDSS14} for some representative works.

\smallskip

\noindent {\bf Comparison with Prior Work}
In recent work, Chan, Diakonikolas, Servedio, and Sun~\cite{CDSS14} proposed a general approach to {\em learn} univariate probability distributions
that are well approximated by piecewise polynomials. \cite{CDSS14} obtained a computationally efficient and sample near-optimal
algorithm to agnostically learn piecewise polynomial distributions, thus obtaining efficient
estimators for various classes of structured distributions. For many of the classes $C$ considered in Table~\ref{table:results}
the best previously known sample complexity for the identity testing problem for $C$ is identified with the sample complexity of the
corresponding learning problem from~\cite{CDSS14}. We remark
that the results of this paper apply to all classes $C$ considered in~\cite{CDSS14}, and are in fact more general
as our condition (any $p, q \in C$ have a bounded number of ``essential'' crossings) subsumes the piecewise polynomial condition (see discussion before
Corollary~\ref{cor:approx} in Section~\ref{sec:results}).
At the technical level,  in contrast to the learning algorithm of~\cite{CDSS14},
which relies on a combination of linear programming and dynamic programming,
our identity tester is simple and combinatorial.

In the context of property testing,
Batu, Kumar, and Rubinfeld~\cite{BKR:04} gave algorithms for the problem of identity testing of unimodal distributions with sample complexity $O(\log^3 n).$
More recently, Daskalakis, Diakonikolas, Servedio, Valiant, and Valiant~\cite{DDSVV13}
generalized this result to $t$-modal distributions obtaining an identity tester  with
sample complexity $O(\sqrt{t \log(n)}/\eps^{3}+t^2/\eps^4)$. We remark that for the class of $t$-modal distributions our approach yields an identity tester
with sample complexity $O(\sqrt{t \log(n)}/\eps^{5/2})$, matching the lower bound of~\cite{DDSVV13}.
Moreover, our work yields sample optimal identity testing algorithms
not only for $t$-modal distributions, but for a broad spectrum of structured distributions via a unified approach.

It should be emphasized that the main ideas underlying this paper are very different from those of ~\cite{DDSVV13}.
The algorithm of~\cite{DDSVV13} is based on the fact from~\cite{Birge:87} that
any $t$-modal distribution is $\eps$-close in $L_1$ norm to a piecewise constant distribution
with $k = O(t \cdot \log(n)/\eps)$ intervals. Hence, if the location and the width of these $k$ ``flat'' intervals
were known in advance, the problem would be easy: The algorithm could just test identity between the ``reduced'' distributions supported on these $k$ intervals,
thus obtaining the optimal sample complexity of $O(\sqrt{k}/\eps^2) = O(\sqrt{t \log (n)}/\eps^{5/2})$.
To circumvent the problem that this decomposition is not known a priori, \cite{DDSVV13}
start by drawing samples from the unknown distribution $q$ to construct such a decomposition.
There are two caveats with this strategy: First, the number of samples used to achieve this is
 $\Omega(t^2)$ and the number of intervals of the constructed decomposition
is significantly larger than $k$, namely $k' = \Omega(k/\eps)$.
%After the decomposition $\mathcal{I}$ is constructed, the algorithm of~\cite{DDSVV13} applies
%a known identity tester for the reduced distributions $q^{\mathcal{I}}$ and $p^{\mathcal{I}}$ as a black-box and outputs what it outputs.
As a consequence, the sample complexity of identity testing for the reduced distributions on support $k'$ is $\Omega(\sqrt{k'}/\eps^2) =   \Omega(\sqrt{t \log (n)}/\eps^{3}).$

In conclusion, the approach of~\cite{DDSVV13} involves constructing an {\em adaptive} interval decomposition of the domain
followed by a single application of an identity tester to the reduced distributions over those intervals. At a high-level our novel approach works as follows:
We consider {\em several oblivious} interval decompositions of the domain (i.e., without drawing any samples from $q$)
and apply a ``reduced'' identity tester for {\em each} such decomposition. While it may seem surprising that such an approach can be optimal,
our algorithm and its analysis exploit a certain strong property of uniformity testers, namely their performance guarantee with respect
to the {\em $L_2$ norm}. See Section~\ref{sec:results} for a detailed explanation of our techniques.

%This is done by partitioning the domain into $r = O(k/\eps)$ intervals of approximately equal mass,
%and then applying a testing algorithm on the reduced distribution on these intervals.
%The problem with this approach is that the initial partition step requires at least $\Omega(r)$ samples.
%For their reduction to work it is crucial that $r = \Omega(k/\eps)$. In any case, this creates an inherent sample
%complexity of $\Omega(k)$, hence cannot lead to algorithms with sample complexity $o(k)$. In contrast this is what we achieve in this paper.

%Interestingly enough, we develop a {\em unified framework} for testing structured distributions, and design a {\em single} efficient algorithm
%that works simultaneously for various well-studied and natural structured distribution classes, yielding optimal sample complexity bounds.

\new{Finally, we comment on the relation of this work
to the recent paper~\cite{VV14}.  In~\cite{VV14}, Valiant and Valiant study the sample complexity of the identity testing problem as a function of the explicit distribution.
In particular, \cite{VV14} makes no assumptions about the structure of the unknown distribution $q$, and characterizes the sample complexity
of the identity testing problem as a function of the known distribution $p.$
The current work provides a unified framework to exploit structural properties of the unknown distribution $q$,
and yields sample optimal identity testers for various shape restrictions.
Hence, the results of this paper are orthogonal to the results of \cite{VV14}.
}

%We remark that the questions we study correspond to fundamental statistical estimation tasks;
%          hence, it is of substantial interest to obtain the {\em optimal} sample for these problems with an efficient algorithm.
%          From a practical perspective, it is critical to use our data efficiently; indeed, in many natural application scenarios,
%          even modest asymptotic differences in the sample size can play a big role.

% Mention our results work also in the continuous setting, e.g. testing uniformity of a log-concave distribution.

\section{Our Results and Techniques} \label{sec:results}

\subsection{Basic Definitions} \label{ssec:defs} We start with some notation that will be used throughout this paper.
We consider discrete probability distributions over $[n]: = \{1, \ldots, n\}$,
which are given by probability density functions $p: [n] \rightarrow [0,1]$ such that $\sum_{i=1}^n p_i =1$, where $p_i$ is the probability of element
$i$ in distribution $p$. By abuse of notation, we will sometimes use $p$ to denote the distribution with density function $p_i$.
\new{We emphasize that we view the domain $[n]$ as an ordered set. Throughout this paper we will be interested in structured distribution families
that respect this ordering.}

The $L_1$ (resp. $L_2$) norm of a distribution is identified with the $L_1$ (resp. $L_2$) norm of the corresponding \new{density function}, i.e.,
$\|p\|_1 = \sum_{i=1}^n |p_i|$ and $\|p\|_2 = \sqrt{\sum_{i=1}^n p^2_i}$. The $L_1$ (resp. $L_2$) distance between distributions $p$ and $q$
is defined as the $L_1$ (resp. $L_2$) norm of the vector of their difference, i.e., $\|p-q\|_1 = \sum_{i=1}^n |p_i -q_i|$ and
$\|p-q\|_2 = \sqrt{\sum_{i=1}^n (p_i-q_i)^2}$. We will denote by $U_n$ the uniform distribution over $[n]$.

\smallskip

\noindent{\bf Interval partitions and $\mathcal{A}_k$-distance}
Fix a partition of $[n]$ into disjoint intervals
$\mathcal{I} :=  (I_i)_{i=1}^{\ell}.$ For such a collection $\mathcal{I}$ we will denote its cardinality by
$|\mathcal{I}|$, i.e., $|\mathcal{I}| =  \ell.$ For an interval $J\subseteq [n]$, we denote by $|J|$ its cardinality or length, i.e.,
if $J = [a, b]$, with $a \le b \in [n]$, then $|J| = b-a+1.$
The {\em reduced distribution} $p_r^{\mathcal{I}}$ corresponding to $p$ and $\mathcal{I}$ is the distribution over $[\ell]$
that assigns the $i$th ``point'' the mass that $p$ assigns to the
interval $I_i$; i.e., for $i \in [\ell]$, $p_r^{\mathcal{I}} (i) = p(I_i)$.

We now define a distance metric between distributions that will be crucial for this paper.
Let $\mathfrak{J}_k$ be the collection
of all partitions of $[n]$ into $k$ intervals, i.e., $\mathcal{I} \in \mathfrak{J}_k$ if and only if $\mathcal{I} =  (I_i)_{i=1}^{k}$
is a partition of $[n]$ into intervals $I_1, \ldots, I_k$. For $p, q: [n] \to [0,1]$ and $k \in \Z_+$, $2\le k \le n$, we define
the $\mathcal{A}_k$-distance between $p$ and $q$ by
$$\|p-q\|_{\mathcal{A}_k} \eqdef \max_{\mathcal{I} = (I_i)_{i=1}^{k} \in \mathfrak{J}_k} \sum_{i=1}^k |p(I_i) - q(I_i)|
= \max_{\mathcal{I}  \in \mathfrak{J}_k} \| p_r^{\mathcal{I} } - q_r^{\mathcal{I} } \|_1.$$
We remark that the $\mathcal{A}_k$-distance between distributions\footnote{We note that the definition of $\mathcal{A}_k$-distance
in this work is slightly different than~\cite{DL:01, CDSS14}, but is easily seen to
be essentially equivalent. In particular, \cite{CDSS14} considers the quantity $\max_{S \in \mathcal{S}_k} |p(S) - q(S)|$,
where  $\mathcal{S}_k$ is the collection of all unions of at most $k$ intervals in $[n]$. It is a simple exercise to verify that
$\|p-q\|_{\mathcal{A}_k} \le 2 \cdot \max_{S \in \mathcal{S}_k} |p(S) - q(S)| \new{=} \|p-q\|_{\mathcal{A}_{2k+1}}$,
which implies that the two definitions are equivalent up to constant factors for the purpose of both upper and lower bounds.}
is well-studied in probability theory and statistics.
Note that for any pair of distributions $p, q:[n] \to [0,1]$, and any $k \in \Z_+$ with $2\le k \le n$,
we have that $\|p-q\|_{\mathcal{A}_k} \le \|p-q\|_1$, and the two metrics are identical for $k=n$.
Also note that $\|p-q\|_{\mathcal{A}_2} = 2 \cdot \dk(p, q)$, where $\dk$ is the Kolmogorov metric (i.e., the $L_{\infty}$ distance between the CDF's).

\smallskip

\noindent {\bf Discussion} The well-known \emph{Vapnik-Chervonenkis (VC) inequality}
(see e.g., {\cite[p.31]{DL:01}}) provides the information-theoretically optimal sample size
to {\em learn} an arbitrary distribution $q$ over $[n]$ in this metric. In particular,
it implies that $m = \Omega(k/\eps^2)$ iid draws from $q$
suffice in order to learn $q$ within $\mathcal{A}_k$-distance $\eps$ (with probability at least $9/10$).
This fact has recently proved useful in the context of learning structured distributions: By exploiting this fact,
Chan, Diakonikolas, Servedio, and Sun~\cite{CDSS14} recently obtained computationally efficient and near-sample optimal algorithms for learning
various classes of structured distributions {\em with respect to the $L_1$ distance}.

It is thus natural to ask the following question: What is the sample complexity of {\em testing} properties of distributions
with respect to the $\mathcal{A}_k$-distance? Can we use property testing algorithms in this metric
to obtain sample-optimal testing algorithms for interesting classes of structured distributions  {\em with respect to the $L_1$ distance}?
In this work we answer both questions in the affirmative
for the problem of identity testing.

\subsection{Our Results} \label{ssec:results}

Our main result is an optimal algorithm for the identity testing problem under the
$\mathcal{A}_k$-distance metric:
\begin{theorem}[Main] \label{thm:main}
Given $\eps>0$, an integer $k$ with $2 \le k \le n$, sample access to a distribution $q$ over $[n]$,
and an explicit distribution $p$ over $[n]$,
there is a computationally efficient algorithm which uses $O(\sqrt{k}/\eps^2)$ samples from $q$,
and with probability at least $2/3$ distinguishes whether $q = p$ versus $\|q-p\|_{{\cal A}_k} \ge \eps$.
Additionally, $\Omega(\sqrt{k}/\eps^2)$ samples are information-theoretically necessary.
\end{theorem}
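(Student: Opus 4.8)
The plan is to prove the upper and lower bounds separately. For the upper bound, the natural approach is a reduction to $L_2$ (collision-type) uniformity testing. Given the explicit distribution $p$, I would first apply a bucketing/flattening transformation: partition $[n]$ so that $p$ becomes (approximately) uniform on the resulting refined domain, or more precisely split each point $i$ into $\lceil n p_i \rceil$ copies so that $p$ becomes exactly uniform on a domain of size $n' = \Theta(n)$. The key point is that the $\mathcal{A}_k$-distance is preserved (up to constant factors) under this operation, since any partition of the split domain into $O(k)$ intervals pulls back to a partition of $[n]$ into $O(k)$ intervals. So without loss of generality $p = U_{n'}$. Now the crucial observation — the "strong property" alluded to in the introduction — is that a uniformity tester based on counting collisions (à la Goldreich–Ron / Paninski / Chan–Diakonikolas–Valiant–Valiant) actually distinguishes $q = U_{n'}$ from $\|q - U_{n'}\|_2 \ge \eps'$ using $O(\sqrt{n'}\,\eps'^{-2})$ samples, and this bound is really governed by $\|q - U_{n'}\|_2$, not $\|q-U_{n'}\|_1$. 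Then I would observe that for any partition $\mathcal{I}$ into $k$ intervals, the reduced distribution $q_r^{\mathcal I}$ on $[k]$ satisfies, by Cauchy–Schwarz, $\|q_r^{\mathcal I} - (U_{n'})_r^{\mathcal I}\|_1 \le \sqrt{k}\,\|q_r^{\mathcal I} - (U_{n'})_r^{\mathcal I}\|_2$, so a lower bound on the $\mathcal{A}_k$-distance translates into a lower bound of order $\eps/\sqrt{k}$ on the $L_2$ distance of the reduced distributions for the maximizing partition.

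The difficulty — and the heart of the argument — is that we do not know the maximizing partition $\mathcal{I}$, and we cannot afford to learn it. Here I would use the "oblivious decompositions" idea: rather than one adaptive partition, fix ahead of time a small family of partitions of $[n]$ (into $k$ intervals each), chosen so that for \emph{every} distribution $q$ with $\|q - p\|_{\mathcal A_k} \ge \eps$, at least one partition in the family witnesses the discrepancy in the $L_2$ sense. A clean way to do this is to take the partitions induced by dividing $[n']$ into $k$ consecutive blocks of equal size, then into $k$ blocks after a cyclic shift, for $O(1)$ or $O(\log k)$ shifts; one shows combinatorially that any $k$-interval partition achieving the max is "refined by" or "close to" one of these regular partitions up to a constant loss in the $\mathcal A_k$ value. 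For each partition in the family I run the $L_2$ uniformity tester on the reduced distribution (which lives on a domain of size $k$), each using $O(\sqrt{k}/\eps^2)$ samples, and accept iff all of them accept. If $q = p$ every reduced distribution is uniform and all sub-tests accept with high probability; if $\|q-p\|_{\mathcal A_k}\ge\eps$ then some regular partition inherits $\mathcal{A}_k$-discrepancy $\Omega(\eps)$, hence $L_2$-discrepancy $\Omega(\eps/\sqrt{k})$ on its $k$-point reduction, so the corresponding sub-test rejects with high probability. A union bound over the $O(\mathrm{polylog})$ sub-tests controls the error probability; the total sample complexity is $O(\sqrt{k}/\eps^2)$ up to the polylog factor, and a more careful choice of the family (or reusing samples across sub-tests) removes the polylog to get the stated $O(\sqrt{k}/\eps^2)$.

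For the matching lower bound $\Omega(\sqrt{k}/\eps^2)$, I would embed Paninski's uniformity-testing lower bound. Take the uniform distribution $U_k$ on $[k]$ (viewed inside $[n]$ by putting all mass on $k$ equally spaced points, or by a block construction), and consider the standard hard family: perturb each consecutive pair of coordinates by $\pm\eps/k$ in opposite directions, chosen by independent fair coins. Each such perturbed distribution $q$ has $\|q - U_k\|_1 = \eps$, and since $q - U_k$ changes sign $O(k)$ times, $\|q - U_k\|_{\mathcal A_k} = \Theta(\eps)$. Distinguishing $U_k$ from a random member of this family requires $\Omega(\sqrt{k}/\eps^2)$ samples by the standard moment/Ingster-style argument (bounding the chi-squared divergence between the mixture and the null after $m$ samples), and this immediately yields the lower bound for $\mathcal A_k$-identity testing since both $U_k$ and the perturbations can be taken inside the relevant distribution class. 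The main obstacle overall is the upper-bound step: proving that a \emph{fixed} polylog-size family of oblivious partitions suffices to certify $\mathcal A_k$-distance up to constants, and carefully tracking the $L_1$-to-$L_2$ loss so that the per-test error $\eps/\sqrt k$ is exactly the right quantity for an $O(\sqrt k/\eps^2)$-sample $L_2$ uniformity tester.
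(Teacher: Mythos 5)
Your reduction to uniformity (splitting point $i$ into $\propto p_i$ copies so that $p$ becomes uniform, with $\mathcal{A}_k$ preserved and samples simulable), your use of an $L_2$-guaranteed chi-squared uniformity tester, and your lower bound via embedding Paninski's construction as a $k$-flat distribution (so that $\mathcal{A}_k$ and $L_1$ coincide) all match the paper. The gap is in the heart of the upper bound: your oblivious family --- the partition of $[n']$ into $k$ equal blocks together with $O(1)$ or $O(\log k)$ cyclic shifts --- provably does not certify $\mathcal{A}_k$-distance up to constants. Consider $q$ whose difference from uniform alternates with period exactly $1/k$: on each width-$1/k$ block, $q$ exceeds uniform by $\delta$ on the first half and falls short by $\delta$ on the second half, with $\delta$ chosen so $\|q-U\|_1=2\eps$. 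Every interval of width $1/k$, \emph{regardless of shift}, covers one full period and hence has zero net discrepancy, so every reduced distribution in your family is exactly uniform; yet a $k$-interval partition that isolates roughly $k/2$ of the ``high'' half-blocks shows $\|q-U\|_{\mathcal{A}_k}=\Omega(\eps)$. So the claim that the optimal partition is ``close to one of these regular partitions up to a constant loss'' is false, and no amount of shifting at the single scale $1/k$ fixes it.

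What the paper does instead is use partitions at $O(\log(1/\eps))$ geometrically finer scales, $\ell_j = k\cdot 2^j$ equal intervals for $j=0,\dots,j_0-1$, and --- this is the quantitatively subtle point your sketch also misses --- it runs the $L_2$ tester at scale $j$ with a \emph{scale-dependent} threshold $\eps_j = C\eps\, 2^{3j/8}$ (i.e., $L_2^2$ threshold $\Theta(2^{-j/4}\eps^2/k)$), not the uniform ``$\eps/\sqrt{k}$ per test'' you propose. The structural lemma (proved via a scale-sensitive $L_2$ quantity and a trough/width analysis: narrow troughs that survive cancellation must be deep, hence contribute disproportionately to $L_2^2$ at the matching scale) shows that $\|q-U\|_{\mathcal{A}_k}\ge\eps$ forces some scale $j$ to exceed its threshold. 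The growing thresholds are exactly what keep the cost of scale $j$ at $O(\sqrt{\ell_j}/\eps_j^2\cdot\log(1/\delta_j)) = O(2^{-j/4}\sqrt{k}/\eps^2)$, summing to $O(\sqrt{k}/\eps^2)$; if instead you only go to the single fine scale $k/\eps$ with target $L_1$ error $\eps$ (the natural repair of your argument), you pay $\Omega(\sqrt{k}/\eps^{5/2})$. So your proposal is missing the multi-scale decomposition and the accompanying structural lemma, which are the core of the paper's proof.
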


The information-theoretic sample lower bound of $\Omega(\sqrt{k}/\eps^2)$ can be easily deduced from the known lower bound of
$\Omega(\sqrt{n}/\eps^2)$ for uniformity testing over $[n]$ under the $L_1$ norm~\cite{Paninski:08}.
Indeed, if the underlying distribution $q$ over $[n]$ is piecewise constant with $k$ pieces, and $p$ is the uniform distribution over $[n]$,
we have $\|q-p\|_{\mathcal{A}_k} = \|q-p\|_1.$ Hence, our $\mathcal{A}_k$-uniformity testing problem in this case is at least as hard
as $L_1$-uniformity testing over support of size $k$.

\new{The proof of Theorem~\ref{thm:main} proceeds in two stages: In the first stage, we reduce the $\mathcal{A}_k$ identity testing problem
to $\mathcal{A}_k$ uniformity testing without incurring any loss in the sample complexity.
In the second stage, we use an optimal $L_2$ {\em uniformity} tester as a black-box to obtain an $O(\sqrt{k}/\eps^2)$ sample algorithm
for $\mathcal{A}_k$ uniformity testing. We remark that the $L_2$ uniformity tester is not applied to the distribution $q$ directly, but to a sequence
of reduced distributions $q_r^{\cal{I}}$, for an appropriate collection of interval partitions $\cal{I}$. See Section~\ref{ssec:techniques} for a detailed
intuitive explanation of the proof.

We remark that an application of Theorem~\ref{thm:main} for $k=n$, yields
a sample optimal $L_1$ identity tester (for an arbitrary distribution $q$),
giving a new algorithm matching the recent tight upper bound in~\cite{VV14}. Our new $L_1$ identity tester is arguable simpler and more intuitive,
as it only uses an $L_2$ uniformity tester in a black-box manner.}

\new{We show that} Theorem~\ref{thm:main} has a wide range of applications to the problem of $L_1$ identity testing for various classes of natural and well-studied structured distributions. \new{At a high level, the main message of this work is that the $\mathcal{A}_k$ distance
can be used to characterize the sample complexity of $L_1$ identity testing for broad classes of structured distributions.} The following simple proposition underlies our approach:

\begin{proposition} \label{prop:simple}
For a distribution class $C$ \new{over $[n]$} and $\eps>0$, let $k= \new{k(C, \eps)}$ be the smallest integer such that for any $f_1, f_2 \in C$ it holds that
$\|f_1-f_2\|_1 \le \|f_1-f_2\|_{{\mathcal A}_k} + \eps/2$. Then there exists an $L_1$ identity testing algorithm for $C$ using $O(\sqrt{k}/\eps^2)$ samples.
\end{proposition}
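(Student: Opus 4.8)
The plan is to derive Proposition~\ref{prop:simple} as an essentially immediate consequence of Theorem~\ref{thm:main}. Given the class $C$ and accuracy parameter $\eps$, set $k := k(C,\eps)$ as in the statement. Note first that $k \le n$: indeed, $\|f_1-f_2\|_{\mathcal{A}_n} = \|f_1-f_2\|_1$ for all $f_1,f_2$, so $k=n$ trivially satisfies the defining inequality, and hence the minimal such integer is at most $n$. We may also assume $k \ge 2$, since $\|f_1-f_2\|_{\mathcal{A}_1}\equiv 0$ and the case $k=1$ only arises in the degenerate situation where all pairs in $C$ are already within $\eps/2$ in $L_1$ distance, in which there is nothing to test (one can also just replace $k$ by $\max(k,2)$, which only enlarges the $\mathcal{A}_k$-distance and hence preserves the defining inequality, by a standard refinement argument). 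So Theorem~\ref{thm:main} applies with interval parameter $k$. The tester for $C$ is then simply: run the $\mathcal{A}_k$ identity tester of Theorem~\ref{thm:main} on input $q$ and the explicit distribution $p \in C$, with distance parameter $\eps/2$ in place of $\eps$, and output whatever it outputs.

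For correctness I would run the usual two-case analysis. If $q = p$, then by Theorem~\ref{thm:main} the $\mathcal{A}_k$ tester accepts with probability at least $2/3$. If instead $\|q-p\|_1 \ge \eps$, then since both $q$ and $p$ lie in $C$, the defining property of $k = k(C,\eps)$ applied to the pair $(q,p)$ gives $\|q-p\|_1 \le \|q-p\|_{\mathcal{A}_k} + \eps/2$, and therefore $\|q-p\|_{\mathcal{A}_k} \ge \|q-p\|_1 - \eps/2 \ge \eps/2$. Hence the $\mathcal{A}_k$ tester, run with distance parameter $\eps/2$, rejects with probability at least $2/3$. In both cases the output is correct with probability at least $2/3$, which is exactly the required guarantee for $L_1$ identity testing of $C$.

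For the sample complexity, Theorem~\ref{thm:main} with interval parameter $k$ and distance parameter $\eps/2$ uses $O\!\left(\sqrt{k}/(\eps/2)^2\right) = O(\sqrt{k}/\eps^2)$ samples, as claimed; computational efficiency of the tester is inherited from that of the $\mathcal{A}_k$ tester.

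I do not expect any genuine obstacle here: the entire content of the reduction is the one-line inequality $\|q-p\|_{\mathcal{A}_k} \ge \|q-p\|_1 - \eps/2$, which is precisely the definition of $k(C,\eps)$ specialized to the pair $(q,p)\in C\times C$. The only points that warrant a word of care are the boundary conditions on $k$ (ensuring $2 \le k \le n$ so that Theorem~\ref{thm:main} is applicable, as discussed above) and the fact that we are entitled to assume $p \in C$, which is exactly the standing setup of $L_1$ identity testing for $C$. All the real work — in particular constructing an $\mathcal{A}_k$ identity tester with optimal sample complexity — is already packaged inside Theorem~\ref{thm:main}.
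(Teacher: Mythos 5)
Your proof is correct and is essentially identical to the paper's own argument: run the $\mathcal{A}_k$ identity tester of Theorem~\ref{thm:main} with error parameter $\eps/2$, with completeness immediate and soundness following from the defining inequality $\|q-p\|_1 \le \|q-p\|_{\mathcal{A}_k} + \eps/2$ applied to the pair $(q,p) \in C \times C$. The extra remarks on the boundary conditions $2 \le k \le n$ are fine but not needed for the core argument.
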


The proof of the proposition is straightforward: Given sample access to $q \in C$ and an explicit description of  $p \in C$,
we apply the $\mathcal{A}_k$-identity testing algorithm of Theorem~\ref{thm:main} for the value of $k$ in the statement of the proposition,
and error $\eps' = \eps/2$. If $q = p$, the algorithm will output ``YES'' with probability at least $2/3$. If $\|q - p\|_1 \ge \eps$, then by the condition of
Proposition~\ref{prop:simple} we have that $\|q-p\|_{\mathcal{A}_k} \ge \eps'$, and the algorithm will output ``NO'' with probability at least $2/3$.
Hence, as long as the underlying distribution satisfies the condition of Proposition~\ref{prop:simple} for a value of $k = o(n)$, Theorem~\ref{thm:main}
yields an asymptotic improvement over the sample complexity of $\Theta(\sqrt{n}/\eps^2)$.

\new{We remark that the value of $k$ in the proposition is a natural complexity measure for the difference between two probability density functions in the class $C$.
It follows from the definition of the ${\mathcal A}_k$ distance that this value corresponds to the number of ``essential'' crossings between $f_1$ and $f_2$ --  i.e.,
the number of crossings between the functions $f_1$ and $f_2$ that significantly affect their $L_1$ distance. Intuitively, the number of essential crossings -- as opposed
to the domain size -- is, in some sense, the ``right'' parameter to characterize the sample complexity of $L_1$ identity testing for $C$. As we explain below, the upper bound
implied by the above proposition is information-theoretically optimal for a wide range of structured distribution classes $C$.}

%As explained below, for a wide range of structured
%distribution classes $C$, the combination of Theorem~\ref{thm:main} and Proposition~\ref{prop:simple} yield {\em sample optimal} identity testers.

More specifically, our framework can be applied to all structured distribution classes $C$ that can be well-approximated in $L_1$ distance by {\em
piecewise low-degree polynomials}. %To formally state our applications, we need a couple of  definitions.
We say that a distribution $p$ over $[n]$ is {\em $t$-piecewise degree-$d$}
if there exists a partition of $[n]$ into $t$ intervals such that $p$ is a (discrete) degree-$d$ polynomial within each interval.
Let ${\cal P}_{t, d}$ denote the class of all $t$-piecewise degree-$d$ distributions over $[n]$. We say that a distribution class $C$
is {\em $\eps$-close in $L_1$} to ${\cal P}_{t, d}$ if for any $f \in C$ there exists $p \in {\cal P}_{t, d}$ such that $\|f-p\|_1 \le \eps.$
It is easy to see that any pair of distributions $p, q \in {\cal P}_{t, d}$
have at most $2t(d+1)$ crossings, which implies that $\|p - q\|_{A_k} = \|p-q\|_1$, for $k = 2t(d+1)$ (see e.g., Proposition 6 in~\cite{CDSS14}).
We therefore obtain the following:
\begin{corollary} \label{cor:approx}
Let $C$ be a distribution class \new{over $[n]$} \new{and $\eps>0$}. \new{Consider parameters $t = t(C, \eps)$
and $d = d(C, \eps)$} such that $C$ is $\eps/4$-close in $L_1$ to ${\cal P}_{t, d}$.
Then there exists an $L_1$ identity testing algorithm for $C$ using $O(\sqrt{t (d+1)}/\eps^2)$ samples.
\end{corollary}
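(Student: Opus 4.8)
The plan is to deduce this immediately from Proposition~\ref{prop:simple}: it suffices to show that, under the stated hypothesis, the parameter $k(C,\eps)$ -- the least $k$ for which every pair $f_1, f_2 \in C$ satisfies $\|f_1 - f_2\|_1 \le \|f_1-f_2\|_{\mathcal{A}_k} + \eps/2$ -- is $O(t(d+1))$. Feeding $k(C,\eps) \le 2t(d+1)$ into Proposition~\ref{prop:simple} then yields an $L_1$ identity tester for $C$ with $O(\sqrt{2t(d+1)}/\eps^2) = O(\sqrt{t(d+1)}/\eps^2)$ samples, which is the claim.

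To bound $k(C,\eps)$, fix arbitrary $f_1, f_2 \in C$ and use the approximation hypothesis to pick $p_1, p_2 \in \mathcal{P}_{t,d}$ with $\|f_i - p_i\|_1 \le \eps/4$. The one structural input I need is that $p_1 - p_2$ is a piecewise degree-$d$ function on the common refinement of the two underlying interval partitions -- a refinement into at most $2t-1$ intervals -- so, counting at most $d$ roots inside each piece plus the at most $2t-2$ interior breakpoints, $p_1 - p_2$ has fewer than $2t(d+1)$ sign changes. Hence there is an interval partition $\mathcal{I} \in \mathfrak{J}_k$ with $k := 2t(d+1)$ on each part of which $p_1 - p_2$ has constant sign, and on such a partition $\sum_i |p_1(I_i) - p_2(I_i)| = \|p_1 - p_2\|_1$; this is exactly the assertion $\|p_1 - p_2\|_{\mathcal{A}_k} = \|p_1 - p_2\|_1$ (Proposition~6 of~\cite{CDSS14}).

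Now I would chain triangle inequalities, using the two elementary facts that $\|\cdot\|_{\mathcal{A}_k} \le \|\cdot\|_1$ and that $\sum_i |g(I_i)| \le \|g\|_1$ for any partition $(I_i)$ of $[n]$ (the $I_i$ are disjoint and cover $[n]$):
\[ \|f_1 - f_2\|_1 \;\le\; \|p_1 - p_2\|_1 + \tfrac{\eps}{2} \;=\; \|p_1 - p_2\|_{\mathcal{A}_k} + \tfrac{\eps}{2} \;\le\; \|f_1 - f_2\|_{\mathcal{A}_k} + \|f_1-p_1\|_1 + \|f_2-p_2\|_1 + \tfrac{\eps}{2}. \]
This gives $\|f_1-f_2\|_1 \le \|f_1-f_2\|_{\mathcal{A}_k} + \eps$; taking the piecewise-polynomial approximants within a slightly smaller constant multiple of $\eps$ (so that the two rounds of triangle-inequality loss together total at most $\eps/2$) puts the bound in the form required by Proposition~\ref{prop:simple}, and in any event $k(C,\Theta(\eps)) \le 2t(d+1)$, which is all that the claimed sample complexity needs.

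I do not expect a real obstacle here: the corollary is essentially bookkeeping on top of Proposition~\ref{prop:simple}, the cited crossing bound for piecewise polynomials, and the monotonicity $\|\cdot\|_{\mathcal{A}_k} \le \|\cdot\|_1$. The only points that call for a moment's care are verifying the crossing count $2t(d+1)$ from the common refinement and tracking the constants through the two triangle inequalities; both are routine. Alternatively, one can bypass Proposition~\ref{prop:simple} and apply Theorem~\ref{thm:main} directly to the pair $(q,p)$ with $k = 2t(d+1)$ and error parameter $\Theta(\eps)$, using the same inequality to certify that $\|q-p\|_1 \ge \eps$ forces $\|q-p\|_{\mathcal{A}_k} = \Omega(\eps)$.
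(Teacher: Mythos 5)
Your proof follows essentially the same route the paper intends and leaves implicit: the crossing count on the common refinement giving $\|p_1-p_2\|_{\mathcal{A}_k}=\|p_1-p_2\|_1$ for $k=2t(d+1)$ (Proposition~6 of~\cite{CDSS14}), the triangle-inequality transfer from the approximants back to $f_1,f_2\in C$, and then Proposition~\ref{prop:simple} (equivalently, Theorem~\ref{thm:main} applied directly with $k=2t(d+1)$). Your bookkeeping point that the two rounds of triangle inequality cost $\eps$ rather than $\eps/2$ -- so the approximation constant should be taken a factor of two smaller than the stated $\eps/4$ to invoke Proposition~\ref{prop:simple} verbatim -- is correct and is exactly the kind of constant-factor adjustment the paper glosses over; it does not affect the claimed $O(\sqrt{t(d+1)}/\eps^2)$ bound.
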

 \new{Note that any pair of values $(t, d)$ satisfying the condition above suffices for the conclusion of the corollary.
 Since our goal is to minimize the sample complexity,}
for a given class $C$, we would like to apply the corollary for values $t$ and $d$ satisfying the above condition and are
such that the product $t(d+1)$ is minimized. \new{The appropriate choice of these values is crucial,
and is based on properties of the underlying distribution family.}
Observe that the sample bound of $O(\sqrt{t (d+1)}/\eps^2)$ is tight in general,
as follows by selecting $C = {\cal P}_{t, d}$. This can be deduced from the general lower bound of $\Omega(\sqrt{n}/\eps^2)$ for uniformity testing,
and the fact that for $n = t(d+1)$, any distribution over support $[n]$ can be expressed as a $t$-piecewise degree-$d$ distribution.

The concrete testing results of Table~\ref{table:results} are obtained from Corollary~\ref{cor:approx}
by using known existential approximation theorems~\cite{Birge:87, CDSS13, CDSS14} for the corresponding structured distribution classes.
In particular, we obtain efficient identity testers, in most cases with provably optimal sample complexity,
for all the structured distribution classes studied in~\cite{CDSS13, CDSS14} in the context of learning.
%See Table~\ref{table:results} for our upper bounds on an indicative collection of structured distributions, and comparison to previous results.
Perhaps surprisingly, our upper bounds are tight not only for the class of piecewise polynomials, but also for the specific shape restricted
classes of Table~\ref{table:results}. The corresponding lower bounds for specific classes are either known from previous work (as e.g.,
in the case of $t$-modal distributions~\cite{DDSVV13}) or can be obtained using standard constructions.
%For example, the $\Omega(1/\eps^{9/4})$
%lower bound for log-concave distributions can be easily shown by adapting the lower bound construction of~\cite{DL:01} (Lemma~15.1) for the
%corresponding learning problem.

\new{Finally, we remark that the results of this paper can be appropriately generalized to the setting of testing the identity of
continuous distributions over the real line. More specifically,
Theorem~\ref{thm:main} also holds for probability distributions over $\R$.
(The only additional assumption required is that the explicitly given continuous pdf $p$
can be efficiently integrated up to any additive accuracy.) In fact, the proof for the discrete setting
extends almost verbatim to the continuous setting with minor modifications.
It is easy to see that both Proposition~\ref{prop:simple} and Corollary~\ref{cor:approx} hold for the continuous setting as well.}

\subsection{Our Techniques} \label{ssec:techniques}
We now provide a detailed intuitive explanation of the ideas that lead to our main result, Theorem~\ref{thm:main}.
Given sample access to a distribution $q$ and an explicit distribution $p$, we want to test whether $q = p$ versus $\|q-p\|_{\mathcal{A}_k} \ge \eps$.
By definition we have that $\|q-p\|_{\mathcal{A}_k} = \max_{\mathcal{I}} \| q_r^{\mathcal{I} } - p_r^{\mathcal{I} } \|_1$.
So, if the ``optimal'' partition $\mathcal{J}^{\ast} = (J_i^{\ast})_{j=1}^k$ maximizing this expression
was known a priori, the problem would be easy: Our algorithm could then consider the reduced distributions
$q_r^{\mathcal{J}^{\ast}}$ and $p_r^{\mathcal{J}^{\ast}}$, which are supported on sets of size $k$, and call a standard $L_1$-identity tester
to decide whether $q_r^{\mathcal{J}^{\ast}} = p_r^{\mathcal{J}^{\ast}}$ versus $\|q_r^{\mathcal{J}^{\ast}} - p_r^{\mathcal{J}^{\ast}}\|_1 \ge \eps$.
(Note that for any given partition $\mathcal{I}$ of $[n]$ into intervals and any distribution $q$, given sample access to $q$ one
can simulate sample access to the reduced distribution $q_r^{\mathcal{I}}$.) The difficulty, of course, is that the optimal $k$-partition is {\em not
fixed}, as it depends on the unknown distribution $q$, thus it is not available to the algorithm. Hence, a more refined approach is necessary.

Our starting point is a \new{new,} simple reduction of the general problem of identity testing to its special case of uniformity testing.
\new{The main idea of the reduction is to appropriately ``stretch'' the domain size, using the explicit distribution $p$,
in order to transform the identity testing problem between $q$ and $p$
into a uniformity testing problem for a (different) distribution $q'$ (that depends on $q$ and $p$). To show correctness
of this reduction we need to show that it preserves the $\mathcal{A}_k$ distance, and that we can sample from $q'$ given samples from $q$.}

We now proceed with the details. Since $p$ is given explicitly in the input, \new{we assume for simplicity that} each $p_i$ is a rational number,
hence there exists some (potentially large) $N \in \Z_+$ such that
$p_i = \alpha_i / N$, where $\alpha_i \in \Z_{+}$ and $\sum_{i=1}^n \alpha_i = N.$\footnote{\new{We remark that this assumption is not necessary:
For the case of irrational $p_i$'s we can approximate them by rational numbers $\widetilde{p}_i$ up to sufficient accuracy and proceed
with the approximate distribution $\widetilde{p}$. This approximation step does not preserve perfect completeness; however, we point out that
our testers have some mild robustness in the completeness case, which suffices for all the arguments to go through.}}
 Given sample access to $q$ and an explicit $p$ over $[n]$, we construct an instance of the uniformity testing problem as follows:
Let $p'$ be the uniform distribution over $[N]$
and let $q'$ be the distribution over $[N]$ obtained from $q$ by subdividing the probability mass of $q_i$, $i \in [n]$, equally among
$\alpha_i$ new consecutive points. It is clear that this reduction preserves the $\mathcal{A}_{k}$ distance, i.e.,
$\|q-p\|_{\mathcal{A}_k}  = \|q'-p'\|_{\mathcal{A}_{k}}.$
The only remaining task is to show how to simulate sample access to $q'$, given samples from $q$.
Given a sample $i$ from $q$, our sample for $q'$ is selected uniformly at random from the corresponding set of $\alpha_i$ many
new points. Hence, we have reduced the problem of identity testing between $q$ and $p$ in ${\mathcal{A}_k}$ distance, to the problem
of uniformity testing of $q'$ in $\mathcal{A}_{k}$ distance. Note that this reduction is also computationally efficient,
as it only requires $O(n)$ pre-computation to specify the new intervals.

%both the sample complexity and the running time of the $\mathcal{A}_{k}$-uniformity testing problem is independent of the support size.

For the rest of this section, we focus on the problem of $\mathcal{A}_{k}$ uniformity testing. For notational convenience, we will use $q$ to denote
the unknown distribution and $p$ to denote the uniform distribution over $[n]$. The rough idea
is to consider an appropriate collection of interval partitions of $[n]$ and call a standard $L_1$-uniformity tester
for each of these partitions. To make such an approach work and give us a {\em sample optimal} algorithm for our
$\mathcal{A}_{k}$-uniformity testing problem we need to use a subtle and strong property of uniformity testing, \new{namely its performance guarantee
under the $L_2$ norm}.
\new{We elaborate on this point below}.

For any partition $\mathcal{I}$ of $[n]$ into $k$ intervals
by definition we have
that $\|q_r^{\mathcal{I}} - p_r^{\mathcal{I}}\|_1 \le \|q - p\|_{\mathcal{A}_k}.$ Therefore, if $q = p$, we will also have
$q_r^{\mathcal{I}} = p_r^{\mathcal{I}}$. The issue is that $\|q_r^{\mathcal{I}} - p_r^{\mathcal{I}}\|_1$ can be much smaller
than  $\|q - p\|_{\mathcal{A}_k}$; in fact, it is not difficult to construct examples where $\|q - p\|_{\mathcal{A}_k} = \Omega(1)$
and $\|q_r^{\mathcal{I}} - p_r^{\mathcal{I}}\|_1 = 0.$ In particular, it is possible for the points where $q$ is larger than $p$,
and where it is smaller than $p$ to cancel each other out within each interval in the partition,
thus making the partition useless for distinguishing $q$ from $p$.
In other words, if the partition $\mathcal{I}$ is not ``good'', we may not be able to detect any existing discrepancy.
A simple, but suboptimal, way to circumvent this issue is to consider a partition $\mathcal{I}'$ of $[n]$ into $k' =  \Theta(k/\eps)$ intervals
of the same length. Note that each such interval will have probability mass $1/k' = \Theta(\eps/k)$ under the uniform distribution $p$.
If the constant in the big-$\Theta$ is appropriately selected, say $k' = 10k/\eps$,
it is not hard to show that $\|q_r^{\mathcal{I}'} - p_r^{\mathcal{I}'}\|_1 \ge \|q - p\|_{\mathcal{A}_k} - \eps/2$; hence,
we will necessarily detect a large discrepancy for the reduced distribution. By applying the optimal $L_1$ uniformity tester
this approach will require $\Omega(\sqrt{k'} / \eps^2) = \Omega(\sqrt{k}/\eps^{2.5})$ samples.

A key tool that is essential in our analysis is a strong property of uniformity testing. An optimal $L_1$ uniformity
tester for $q$ can distinguish between the uniform distribution and the case that $\|q - p\|_1\ge \eps$ using $O(\sqrt{n}/\eps^2)$ samples.
However, a stronger guarantee is possible: With the {\em same} sample size, we can distinguish the uniform distribution
from the case that $\|q - p\|_2 \ge \eps/\sqrt{n}$. We emphasize that such a strong $L_2$ guarantee is specific to uniformity testing, and is
provably not possible for the general problem of identity testing. In previous work, Goldreich and Ron~\cite{GR00}
gave such an $L_2$ guarantee for uniformity testing, but their algorithm uses $O(\sqrt{n}/\eps^4)$ samples.
Paninski's $O(\sqrt{n}/\eps^2)$ uniformity tester works for the $L_1$ norm, and it is not known whether it achieves the desired $L_2$ property.
As one of our main tools we show \new{the following $L_2$ guarantee, which is optimal as a function of $n$ and $\eps$}:

\begin{theorem} \label{thm:unif_l2-delta}
Given $0< \eps, \delta < 1$ and sample access to a distribution $q$ over $[n]$, there is an algorithm
{\em Test-Uniformity-}$L_2 (q, n, \eps, \delta)$
which uses $m = O\left( (\sqrt{n}/\eps^2) \cdot \log(1/\delta) \right)$ samples from $q$,
runs in time linear in its sample size, and with probability at least $1-\delta$ distinguishes whether $q = U_n$ versus $\|p-q\|_{2} \ge \eps/\sqrt{n}$.
\end{theorem}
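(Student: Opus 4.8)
The plan is to test with a bias-corrected chi-squared statistic, in the spirit of Pearson's test and the collision-based uniformity tester of Goldreich--Ron, but equipped with a tight second-moment analysis that gives the optimal $\eps^{-2}$ dependence. Write $b := \|q - U_n\|_2^2 = \sum_{i=1}^n (q_i - 1/n)^2$; the task is to distinguish $b = 0$ from $b \ge \eps^2/n$ using $m = C\sqrt{n}/\eps^2$ samples for a large enough absolute constant $C$, and then to boost the success probability from a constant to $1-\delta$ by running $O(\log(1/\delta))$ independent copies and taking a majority vote (which accounts for the extra $\log(1/\delta)$ factor in the sample bound). First I would \emph{Poissonize}: draw $N \sim \Poi(m)$ and take $N$ samples, so that the bin counts $X_1,\dots,X_n$ become independent with $X_i \sim \Poi(mq_i)$; since $N \le 2m$ with high probability this costs only a constant factor in the sample size, and since only the nonempty bins need to be inspected the running time is linear in the sample size.

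Next I define $Z := \sum_{i=1}^n \bigl((X_i - m/n)^2 - X_i\bigr)$. Using $\E[X_i] = \Var[X_i] = m q_i$ one gets $\E\bigl[(X_i - m/n)^2 - X_i\bigr] = (mq_i - m/n)^2$, hence $\E[Z] = m^2 \sum_i (q_i - 1/n)^2 = m^2 b$; so $\E[Z] = 0$ under $q = U_n$ and $\E[Z] \ge m^2\eps^2/n$ under $\|q - U_n\|_2 \ge \eps/\sqrt{n}$. The tester outputs ``$q = U_n$'' iff $Z < \tau$, where $\tau := \tfrac12 m^2\eps^2/n$. The heart of the argument is bounding $\Var[Z]$. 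By independence of the $X_i$, $\Var[Z] = \sum_i \Var[Y_i]$ with $Y_i := (X_i - m/n)^2 - X_i$. Writing $a := m/n$ and $s_i := mq_i - a$ (so $\sum_i s_i = 0$ and $\sum_i s_i^2 = m^2 b$), the Poisson moments up to order $4$ give the per-bin identity
$$\Var[Y_i] \;=\; 2a^2 + 4a\, s_i + (2 + 4a)\, s_i^2 + 4\, s_i^3 ,$$
where the crucial feature is that the $a^3$-order contributions cancel; summing over $i$ (the linear term vanishes) yields
$$\Var[Z] \;=\; \frac{2m^2}{n} + \Bigl(2 + \frac{4m}{n}\Bigr) m^2 b + 4m^3\!\sum_i (q_i - 1/n)^3 \;\le\; \frac{2m^2}{n} + 2m^2 b + \frac{4m^3 b}{n} + 4 m^3 b^{3/2} ,$$
using $\bigl|\sum_i (q_i - 1/n)^3\bigr| \le \max_i |q_i - 1/n| \cdot \sum_i (q_i - 1/n)^2 \le \sqrt{b}\cdot b$.

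With this in hand the two error bounds follow from Chebyshev's inequality. Under $H_0$ ($b = 0$) we have $\E[Z] = 0$ and $\Var[Z] = 2m^2/n$, so $\Pr[Z \ge \tau] \le \Var[Z]/\tau^2 = O(n/(m^2\eps^4)) = O(1/C^2)$. Under $H_1$, $Z < \tau$ forces $|Z - \E Z| \ge \E Z / 2$, so $\Pr[Z < \tau] \le 4\Var[Z]/\E[Z]^2 = 4\Var[Z]/(m^4 b^2)$; substituting $m = C\sqrt n/\eps^2$, $b \ge \eps^2/n$, $n \ge 2$ and $\eps < 1$, each of the four terms in the variance bound is $O(m^4 b^2 / C)$, so this probability is $O(1/C)$ as well. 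Taking $C$ a sufficiently large absolute constant makes both error probabilities at most $1/3$; the standard repetition-and-majority amplification then yields failure probability below $\delta$ with total sample size $O\bigl((\sqrt n/\eps^2)\log(1/\delta)\bigr)$ and running time linear in it.

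The main (and essentially only) obstacle is the variance estimate: a term-by-term bound that handles the $\sum_i (mq_i)^3$ contribution crudely would overestimate it by as much as $\Theta(m^3)$ for spiky $q$, destroying the $\eps^{-2}$ dependence. One must instead expand $\Var[Y_i]$ exactly and observe the cancellation of the leading $a^3$-terms down to the clean form $2a^2 + 4as_i + (2+4a)s_i^2 + 4s_i^3$ \emph{before} summing, after which $\sum_i s_i = 0$ kills the remaining dangerous linear term. Everything else --- Poissonization, the mean computation, the Chebyshev steps, and the amplification --- is routine.
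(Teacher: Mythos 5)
Your proposal is correct and follows essentially the same route as the paper: the same Poissonized statistic $Z=\sum_i (X_i-m/n)^2-X_i$, the same mean/variance computation (your per-bin identity $2a^2+4as_i+(2+4a)s_i^2+4s_i^3$ is exactly the paper's $2m^2(\Delta_i-1/n)^2+4m^3 q_i\Delta_i^2$ rewritten), Chebyshev on both sides, and majority-vote amplification for the $\log(1/\delta)$ factor. The only differences are cosmetic — a threshold of $\tfrac12 m^2\eps^2/n$ instead of $4m/\sqrt{n}$, and grouping the variance into four terms via $\sum_i s_i=0$ rather than the paper's five-term bound.
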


\new{To prove Theorem~\ref{thm:unif_l2-delta}} we show that a  variant of Pearson's chi-squared test~\cite{Pearson1900} -- which
can be viewed as a special case of the recent ``chi-square type'' testers in~\cite{CDVV14, VV14} -- has the desired $L_2$ guarantee.
\new{While this tester has been (implicitly) studied in~\cite{CDVV14, VV14}, and it is known to be sample optimal
with respect to the $L_1$ norm, it has not been previously analyzed for the $L_2$ norm. The novelty of Theorem~\ref{thm:unif_l2-delta}
lies in the tight analysis of the algorithm under the $L_2$ distance, and is presented in Appendix~\ref{sec:unif-L2}.
}

Armed with Theorem~\ref{thm:unif_l2-delta} we proceed as follows:
We consider a set of $j_0 = O(\log(1/\eps))$ different partitions of the domain $[n]$ into intervals. For $0 \le j < j_0$ the partition $\mathcal{I}^{(j)}$ consists of $\ell_j \eqdef  |\mathcal{I}^{(j)}| = k \cdot 2^j$ many intervals $I_i^{(j)}$, $i \in [\ell_j]$, i.e., $\mathcal{I}^{(j)} = (I_i^{(j)})_{i=1}^{\ell_j}$. For a fixed value of $j$, all intervals in $\mathcal{I}^{(j)}$ have the same length, or equivalently, the same probability mass {\em under the uniform distribution}.
Then, for any fixed $j \in [j_0]$, we have $p(I_i^{(j)}) = 1/(k \cdot 2^j)$ for all $i \in [\ell_j].$
(Observe that, by our aforementioned reduction to the uniform case,
we may assume that the domain size $n$ is a multiple of $k2^{j_0}$,
and thus that it is possible to evenly divide into such intervals of the same length).

Note that if $q = p$, then for all $0 \le j < j_0$, it holds $q_r^{\mathcal{I}^{(j)}} = p_r^{\mathcal{I}^{(j)}}$. Recalling that all intervals in $\mathcal{I}^{(j)}$
have the same probability mass {\em under $p$}, it follows that $p_r^{\mathcal{I}^{(j)}} = U_{\ell_j}$,  i.e., $p_r^{\mathcal{I}^{(j)}}$
is the uniform distribution over its support. So, if $q = p$, for any partition we have $q_r^{\mathcal{I}^{(j)}} = U_{\ell_j}$. Our main structural result (Lemma~\ref{lem:structural}) is a robust inverse lemma: {\em If $q$ is far from uniform in $\mathcal{A}_k$ distance then, for at least one of the partitions $\mathcal{I}^{(j)}$, the reduced distribution $q_r^{\mathcal{I}^{(j)}}$ will be far from uniform in $L_2$ distance.} The quantitative version of this statement is quite subtle. In particular, we start from the assumption of being $\eps$-far in $\mathcal{A}_k$ distance and can only deduce ``far'' in $L_2$ distance. This is absolutely critical for us to be able to obtain the optimal sample complexity.

% Rephrase

\new{The key insight  for the analysis comes from noting that the optimal partition separating $q$ from $p$ in $\mathcal{A}_k$ distance cannot have too many parts. Thus, if the ``highs'' and ``lows'' cancel out over some small intervals, they must be very large in order to compensate for the fact that they are relatively narrow. Therefore, when $p$ and $q$ differ on a smaller scale, their $L_2$ discrepancy will be greater, and this compensates for the fact that the partition detecting this discrepancy will need to have more intervals in it.

}

\begin{comment}
In particular, consider the ``finer'' partition
$\mathcal{I}^{(j_0)}$, where we have $\ell_{j_0} = \Theta (k/\eps)$ intervals each of probability mass $\Theta (\eps/k)$ under $p$. If we used an optimal $L_1$ uniformity tester,
for this partition only and desired $L_1$ error $\eps$, we would need at least $\Omega(\sqrt{\ell_{j_0}}/\eps^2) = \Omega(\sqrt{k}/\eps^{2.5})$ samples, which would be too expensive for us. Instead, what we do is rely on a quite subtle property of uniformity testing: With the same sample complexity we can distinguish very small $L_2$ norm.
\new{Need to add intuition as to why the $L_2$ norm is good at distinguishing differences over the intervals of the partition. Give an example where you try the $k$ intervals
of mass $1/k$ and you see no difference. The problem is that the $L_1(q_r, p_r)$ can be much smaller than $L_1(q, p)$. In fact, it can be $0$ when $L_1(q, p) = \eps$. The only way we can guarantee that $L_1(q_r, p_r) \ge \eps/2$ is if we consider $k/\eps$ intervals. After we do that we would need to call the $L_1$ tester for support $k/eps$ and error $\eps$. This would totally kill us. (Maybe this intuition should be in the intro.)}
\end{comment}

In Section~\ref{sec:main} we present our sample optimal uniformity tester under the $\mathcal{A}_k$ distance, thereby
establishing Theorem~\ref{thm:main}. %\new{Due to space constraints some proofs are deferred to the Appendix.}

\section{Testing Uniformity under the ${\mathcal A}_k$-norm} \label{sec:main}

%In this section we give a sample-optimal algorithm for testing uniformity of an arbitrary distribution
%$q$ over $[n]$ with respect to the ${\mathcal A}_k$-distance.

%The algorithm uses $O(\sqrt{k}/\eps^2)$ samples from $q$,
%runs in sample-linear time and distinguishes between the case that $q = U_n$ versus $\|q-U_n\|_{\mathcal{A}_k} \ge \eps$ with probability at least $2/3$.

\smallskip

\fbox{\parbox{6.2in}{
{\bf Algorithm} Test-Uniformity-$\mathcal{A}_k(q, n, \eps)$\\
Input: sample access to a distribution $q$ over $[n]$, $k \in \Z_+$ with $2 \le k \le n$, and $\eps>0$.\\
Output: ``YES'' if $q = U_n$; ``NO'' if $\|q-U_n\|_{\mathcal{A}_k} \ge \eps.$

\begin{enumerate}

\item Draw a sample $S$ of size $m = O(\sqrt{k}/\eps^2)$ from $q$.

%\vspace{-0.2cm}

\item Fix $j_0 \in \Z_+$ such that $j_0  \eqdef \lceil \log_2(1/\eps) \rceil+O(1).$ Consider the collection
$\{\mathcal{I}^{(j)}\}_{j=0}^{j_0-1}$ of $j_0$ partitions of $[n]$ into intervals;
the partition $\mathcal{I}^{(j)} = (I_i^{(j)})_{i=1}^{\ell_j}$ consists of $\ell_j = k \cdot 2^j$ many intervals with
$p(I_i^{(j)}) = 1/(k \cdot 2^j)$, where $p \eqdef U_n$.

%\vspace{-0.2cm}

  \item For $j=0, 1, \ldots, j_0-1$:
%  \vspace{-0.2cm}
	\begin{enumerate}
		\item Consider the reduced distributions $q_r^{\mathcal{I}^{(j)}}$ and $p_r^{\mathcal{I}^{(j)}} \equiv U_{\ell_j}$.
		          Use the sample $S$ to simulate samples to $q_r^{\mathcal{I}^{(j)}}$.
		          %\vspace{-0.2cm}
		\item  Run Test-Uniformity-${L_2}(q_r^{\mathcal{I}^{(j)}}, \ell_j, \eps_j, \delta_j)$
		for $\eps_j = C \cdot \eps \cdot 2^{3j/8}$ for $C>0$ a sufficiently small constant and $\delta_j =  2^{-j}/6$,
		i.e., test whether $q_r^{\mathcal{I}^{(j)}} = U_{\ell_j}$ versus $\| q_r^{\mathcal{I}^{(j)}} - U_{\ell_j}\|_2 > \gamma_j \eqdef \eps_j / \sqrt{\ell_j}.$
	\end{enumerate}
%	\vspace{-0.2cm}
  \item If all the testers in Step~3(b) output  ``YES'', then output ``YES'';  otherwise output ``NO''.
\end{enumerate}
%\vspace{-0.3cm}
}}

%\medskip

%The following proposition characterizes the performance of the above algorithm.

\begin{proposition}\label{prop:ak}
The algorithm {\em Test-Uniformity-}$\mathcal{A}_k(q, n, \eps)$, on input a sample of size $m = O(\sqrt{k}/\eps^2)$ drawn from a distribution $q$ over $[n]$,
$\eps>0$ and an integer $k$ with $2 \le k \le n$, correctly distinguishes the case that $q = U_n$ from the case that $\|q-U_n\|_{\mathcal{A}_k} \ge \eps$, with probability at least $2/3$.
\end{proposition}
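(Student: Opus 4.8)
The plan is to establish two things: first, soundness --- if $q = U_n$ then every tester in Step~3(b) outputs ``YES'' with high enough probability that their conjunction is correct; and second, completeness in the contrapositive sense --- if $\|q - U_n\|_{\mathcal{A}_k} \ge \eps$ then at least one reduced distribution $q_r^{\mathcal{I}^{(j)}}$ is far from uniform in $L_2$ distance, so the corresponding tester outputs ``NO'' with high probability. For soundness, note that if $q = U_n$ then for every $j$ we have $q_r^{\mathcal{I}^{(j)}} = p_r^{\mathcal{I}^{(j)}} = U_{\ell_j}$ exactly, so by Theorem~\ref{thm:unif_l2-delta} the $j$-th tester errs with probability at most $\delta_j = 2^{-j}/6$. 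A union bound over $j = 0, \ldots, j_0-1$ gives total error at most $\sum_{j\ge 0} 2^{-j}/6 = 1/3$, so the algorithm outputs ``YES'' with probability at least $2/3$.

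For completeness, the crux is the robust inverse lemma (Lemma~\ref{lem:structural}, referenced in the text): if $\|q - U_n\|_{\mathcal{A}_k} \ge \eps$, then there exists some $j \in \{0, \ldots, j_0-1\}$ with $\|q_r^{\mathcal{I}^{(j)}} - U_{\ell_j}\|_2 > \gamma_j = \eps_j/\sqrt{\ell_j}$. Granting this, the $j$-th invocation of Test-Uniformity-$L_2$ will, by Theorem~\ref{thm:unif_l2-delta}, output ``NO'' with probability at least $1 - \delta_j \ge 5/6 \ge 2/3$, and hence the overall algorithm outputs ``NO'' with probability at least $2/3$. So the remaining work is to prove the inverse lemma. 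I would argue as follows: let $\mathcal{J}^\ast = (J^\ast_i)_{i=1}^k$ be the optimal $k$-interval partition achieving $\|q-U_n\|_{\mathcal{A}_k} = \sum_i |q(J^\ast_i) - |J^\ast_i|/n|$. Partition the ``defect'' among the intervals $J^\ast_i$ according to their length scale: say $J^\ast_i$ is at scale $j$ if its length is comparable to $n/(k 2^j)$, i.e., roughly between the granularities of $\mathcal{I}^{(j-1)}$ and $\mathcal{I}^{(j)}$. The total $\mathcal{A}_k$-discrepancy $\eps$ is split over $O(\log(1/\eps)) = j_0$ scales (very long intervals and very short ones can be shown to contribute negligibly, using that there are only $k$ of them and the error bound built into $j_0$), so some scale $j$ carries at least $\eps/(2j_0) = \Omega(\eps/\log(1/\eps))$ of the discrepancy. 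On that scale, each offending interval has mass roughly $2^j/k$ under $p$ but is refined into $\Theta(1)$ cells of $\mathcal{I}^{(j)}$, so the $L_1$ discrepancy of $q_r^{\mathcal{I}^{(j)}}$ from $U_{\ell_j}$ is at least a constant fraction of that scale's contribution. Converting $L_1$ to $L_2$ via Cauchy--Schwarz costs a factor $\sqrt{\ell_j} = \sqrt{k 2^j}$: we get $\|q_r^{\mathcal{I}^{(j)}} - U_{\ell_j}\|_2 \gtrsim \frac{\eps}{\log(1/\eps)\sqrt{k 2^j}}$. One then checks this exceeds $\gamma_j = C \eps 2^{3j/8}/\sqrt{k 2^j}$ for the chosen constants --- the $2^{3j/8}$ slack in $\eps_j$ is precisely what absorbs the $\log(1/\eps)$ factor and any loss in the scale-decomposition, and it is also what keeps the sample complexity $O(\sqrt{\ell_j}/\eps_j^2) = O(\sqrt{k 2^j}/(\eps^2 2^{3j/4}))$ summable to $O(\sqrt{k}/\eps^2)$ over $j$.

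I expect the main obstacle to be the quantitative bookkeeping in the inverse lemma: making precise the assignment of intervals of $\mathcal{J}^\ast$ to scales, handling boundary effects where an optimal interval straddles cells of $\mathcal{I}^{(j)}$ (so that its defect does not fully survive into $q_r^{\mathcal{I}^{(j)}}$), and verifying that the worst case over how the $\eps$ of discrepancy is distributed across scales and magnitudes still yields $\|q_r^{\mathcal{I}^{(j)}} - U_{\ell_j}\|_2 > \gamma_j$ for the specific exponent $3/8$ chosen in $\eps_j$. The sample-complexity accounting in Step~1 --- namely that $m = O(\sqrt{k}/\eps^2)$ simultaneously suffices for all $j_0$ calls, since $\sum_j O(\sqrt{\ell_j}\log(1/\delta_j)/\eps_j^2)$ telescopes --- is routine once the exponents are fixed, so I would present it briefly after the inverse lemma is in hand.
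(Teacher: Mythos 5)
The outer shell of your argument is exactly the paper's proof of Proposition~\ref{prop:ak}: in the case $q=U_n$ every reduced distribution $q_r^{\mathcal{I}^{(j)}}$ is exactly $U_{\ell_j}$, a union bound over the error probabilities $\delta_j=2^{-j}/6$ gives total failure probability at most $1/3$, and in the far case the conclusion follows from Lemma~\ref{lem:structural} plus Theorem~\ref{thm:unif_l2-delta}; the accounting showing $\sum_j O\bigl(\sqrt{\ell_j}\log(1/\delta_j)/\eps_j^2\bigr)=O(\sqrt{k}/\eps^2)$ is also as in the paper. The problem is your proposed proof of the structural lemma itself, which is the entire content of the soundness direction, and your sketch has a genuine quantitative gap. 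Pigeonholing the $\mathcal{A}_k$-discrepancy over the $j_0=O(\log(1/\eps))$ scales and then converting $L_1$ to $L_2$ by Cauchy--Schwarz over \emph{all} $\ell_j=k2^j$ cells yields only $\|q_r^{\mathcal{I}^{(j)}}-U_{\ell_j}\|_2\gtrsim \eps\,2^{-j/2}/(\log(1/\eps)\sqrt{k})$, whereas the threshold is $\gamma_j=C\eps\,2^{-j/8}/\sqrt{k}$; the shortfall is a factor of order $2^{3j/8}\log(1/\eps)$, which for $j$ near $j_0$ is polynomially large in $1/\eps$. The $2^{3j/8}$ boost in $\eps_j$ does not ``absorb'' losses --- it \emph{raises} the bar you must clear at fine scales (it is there precisely so that the per-level sample costs form a convergent series), so your chain of inequalities cannot justify it. Concretely, if the discrepancy sits on $\sim k$ troughs of width $\sim\eps/k$ and depth $\Theta(1)$, then at the scale $2^j\approx 1/\eps$ the true value of $\|q_r^{\mathcal{I}^{(j)}}-U_{\ell_j}\|_2^2$ is $\sim\eps^2/k$, but your Cauchy--Schwarz bound certifies only $\sim\eps^3/k$, which is below $\gamma_j^2\approx C^2\eps^{9/4}/k$: the argument fails to certify detection even in a case where detection is easy. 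If one only has your bound, one is forced to take $\eps_j=\Theta(\eps/\log(1/\eps))$ uniformly in $j$, and the sample complexity of the finest level becomes $\widetilde{\Omega}(\sqrt{k}/\eps^{5/2})$ --- exactly the suboptimal ``refine to $k/\eps$ equal intervals'' approach the paper is designed to beat.

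The missing idea is that discrepancy living at fine scales must be \emph{deep}, because the optimal partition has only about $k$ parts: after discarding troughs of discrepancy below $\eps/(20k)$ (at most $2k$ of them, costing $O(\eps)$ total), every surviving trough $J$ satisfies $\discr(J)\ge\eps/(20k)$, so the single cell $I_i^{(j)}\subseteq J$ of comparable length contributes $L_2^2$ mass at least $\discr^2(J)/16\ge(\eps/320k)\discr(J)$ --- a bound \emph{linear} in $\discr(J)$ with a $\eps/k$ prefactor, not the quadratic-in-$L_1$ bound Cauchy--Schwarz gives. Summing these contributions over all levels gives $\Omega(\eps^2/k)$, which is then compared against $\sum_j\gamma_j^2$, a convergent geometric series $\Theta(C^2\eps^2/k)$, forcing some level over its threshold. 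Moreover, the straddling issue you flag (an optimal interval whose discrepancy does not survive into any cell of $\mathcal{I}^{(j)}$) is not a routine boundary effect in the general case: the paper resolves it by introducing the scale-sensitive distance $\|q-p\|_{[k]}$ with the $\width^{1/8}$ weighting and an exchange/optimality argument (if an end-interval carried more than a third of the discrepancy, splitting it off would improve the optimal scale-sensitive partition), which is what lets the argument recurse to a scale where a fully contained cell does witness the discrepancy. Without these two ingredients the specific exponent $3/8$, and hence the $O(\sqrt{k}/\eps^2)$ bound, cannot be recovered.
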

\begin{proof}
First, it is straightforward to verify the claimed sample complexity, as the algorithm only draws samples in Step~1.
Note that the algorithm uses the same set of samples $S$ for all testers in Step~3(b).
By Theorem~\ref{thm:unif_l2-delta}, the tester Test-Uniformity-${L_2}(q_r^{\mathcal{I}^{(j)}}, \ell_j, \eps_j, \delta_j)$,
on input a set of $m_j = O((\sqrt{\ell_j}/\eps_j^2) \cdot \log(1/\delta_j))$ samples from $q_r^{\mathcal{I}^{(j)}}$ distinguishes the case
that $q_r^{\mathcal{I}^{(j)}} = U_{\ell_j}$ from the case that  $\|q_r^{\mathcal{I}^{(j)}} - U_{\ell_j}\|_2 \ge \gamma_j \eqdef \eps_j / \sqrt{\ell_j}$ with probability at least
$1-\delta_j$. From our choice of parameters it can be verified that $\max_j m_j \le m= O(\sqrt{k}/\eps^2)$, hence we can use the same sample $S$ as input to
these testers for all $0 \le j \le j_0-1$. In fact, it is easy to see that $\sum_{j=0}^{j_0-1} m_j = O(m)$, which implies that the overall algorithm runs in sample-linear time.
Since each tester in Step 3(b) has error probability $\delta_j$, by a union bound over all $j \in \{0, \ldots, j_0-1\}$, the total error probability is at most
$\sum_{j=0}^{j_0-1} \delta_j  \le (1/6) \cdot \sum_{j=0}^{\infty} 2^{-j} = 1/3.$
Therefore, with probability at least $2/3$ all the testers in Step~3(b) succeed.
We will henceforth condition on this ``good'' event, and establish the completeness and soundness properties of
the overall algorithm under this conditioning.

We start by establishing completeness. If $q = p = U_n$, then for any partition $\mathcal{I}^{(j)}$, $0 \le j \le j_0-1$, we have that
$q_r^{\mathcal{I}^{(j)}} = p_r^{\mathcal{I}^{(j)}} = U_{\ell_j}$. By our aforementioned conditioning, all testers in Step~3(b) will output ``YES'',
hence the overall algorithm will also output ``YES'', as desired.

We now proceed to establish the soundness of our algorithm.
Assuming that $\|q - p\|_{\mathcal{A}_k} \ge \eps$, we want to show that the algorithm
Test-Uniformity-$\mathcal{A}_k(q, n, \eps)$ outputs ``NO'' with probability at least $2/3$.
Towards this end, we prove the following structural lemma:
\begin{lemma} \label{lem:structural} \new{There exists a constant $C>0$ such that the following holds:}
If $\|q - p\|_{\mathcal{A}_k} \ge \eps$, there exists $j \in \Z_+$ with $0 \le j \le j_0-1$ such that
$\| q_r^{\mathcal{I}^{(j)}} - U_{\ell_j}\|^2_2 \ge \gamma_j^2 \new{\eqdef \eps^2_j /\ell_j = C^2 \cdot (\eps^2/k) \cdot 2^{-j/4}.}$
\end{lemma}
%We defer the proof of Lemma~\ref{lem:structural} until after the end of this proposition.
Given the lemma, the soundness property of our algorithm follows easily.
Indeed, since all testers Test-Uniformity-${L_2}(q_r^{\mathcal{I}^{(j)}}, \ell_j, \eps_j, \delta_j)$ of Step~3(b) are successful by our conditioning,
Lemma~~\ref{lem:structural} implies that at least one of them outputs ``NO'', hence the overall algorithm will output ``NO''.
%This completes the proof of Proposition~\ref{prop:ak}.
\end{proof}

\noindent The proof of Lemma~\ref{lem:structural} in its full generality is quite technical.
For the sake of the intuition, in the following subsection (Section~\ref{ssec:kflat-struct})
we provide a proof of the lemma for the important special case that the unknown distribution $q$ is promised to be {\em $k$-flat}, i.e.,
piecewise constant with $k$ pieces. This setting captures many of the core ideas and, at the same time, avoids some of the necessary technical difficulties
of the general case. Finally, in Section~\ref{sec:ak} we present our proof for the general case.

\subsection{Proof of Structural Lemma: $k$-flat Case} \label{ssec:kflat-struct}

\new{For this special case we will prove the lemma for $C = 1/80$.}
%We now prove Lemma~\ref{lem:structural} for the important special case that the distribution $q$ is assumed to be $k$-flat.
%The proof of the general case builds on the ideas of this special case and can be found in Appendix~\ref{sec:ak}.
Since $q$ is $k$-flat there exists a partition $\mathcal{I}^{\ast} = (I_j^{\ast})_{j=1}^k$
of $[n]$ into $k$ intervals so that $q$ is constant within each such interval.
This in particular implies that $\|q - p\|_{\mathcal{A}_k}  = \|q - p\|_1$, where $p \eqdef U_n$.
For $J \in \mathcal{I}^{\ast}$
let us denote by $q_{J}$ the value of $q$ within interval
$J$, that is, for all $j \in [k]$ and $i \in I_j^{\ast}$ we have $q_i = q_{I_j^{\ast}}$. For notational convenience, we sometimes use
$p_{J}$ to denote the value of $p = U_n$ within interval $J$. %Of course, $p_{J} \equiv 1/n.$
By assumption we have that
$\|q-p\|_1 = %\sum_{i=1}^n |q_i - p_i| =
%\sum_{i=1}^n |q_i - 1/n| =
\sum_{j=1}^k  |I_j^{\ast}| \cdot |q_{I_j^{\ast}} - 1/n|  \ge \eps.$

Throughout the proof, we work with intervals $I_j^{\ast} \in \mathcal{I}^{\ast}$ such that $q_{I_j^{\ast}} < 1/n.$
We will henceforth refer to such intervals as {\em troughs} and will denote by $\mathcal{T} \subseteq [k]$ the corresponding set of indices, i.e.,
$\mathcal{T} = \{ j \in [k] \mid q_{I_j^{\ast}} < 1/n \}$.
For each trough $J \in \{I_j^{\ast}\}_{j \in \mathcal{T}}$ we define its {\em depth} as
$\depth(J) = (p_J - q_J)/p_J =  n \cdot (1/n - q_J)$
and its {\em width} as
$\width(J) = p(J) = (1/n) \cdot |J|.$
Note that the width of $J$ is identified with the probability mass that the uniform distribution assigns to it.
The {\em discrepancy} of a trough $J$ is defined by
$\discr(J) = \depth(J) \cdot   \width(J) = |J| \cdot \left(1/n - q_J \right)$
and corresponds to the contribution of $J$ to the $L_1$ distance between $q$ and $p$.

It follows from Scheffe's identity that half of the contribution to $\|q-p\|_1$ comes
from troughs, namely
$\|q-p\|_1^{\mathcal{T}} \eqdef  \sum_{j \in \mathcal{T}}   \discr(I_j^{\ast}) = (1/2) \cdot \|q-p\|_1 \ge \eps/2.$
\new{An important observation is that we may assume that all troughs have {\em width} at most $1/k$ at the cost of potentially doubling the total number of intervals}.
Indeed, it is easy to see that we can artificially subdivide ``wider'' troughs so that each new trough has width at most $1/k$. This process
comes at the expense of at most doubling the number of troughs. Let us denote by $\{\widetilde{I_j}\}_{j \in \mathcal{T}'}$ this set of (new) troughs,
where $|\mathcal{T}'| \le 2k$ and each $\widetilde{I_j}$ is a subset of some $I_i^{\ast}$, $i \in \mathcal{T}.$
We will henceforth deal with the set of troughs $\{\widetilde{I_j}\}_{j \in \mathcal{T}'}$ each of width at most $1/k$.
By construction, it is clear that
%\vspace{-0.2cm}
\begin{equation} \label{eqn:ena-b}
\|q-p\|_1^{\mathcal{T}'} \eqdef  \sum_{j \in \mathcal{T}'} \discr(\widetilde{I_j}) =
\|q-p\|_1^{\mathcal{T}} \ge \eps/2.
\end{equation}
%\vspace{-0.2cm}
At this point we note that we can essentially ignore troughs $J \in \{\widetilde{I_j}\}_{j \in \mathcal{T}'}$ with small discrepancy.
Indeed,  the total contribution of  intervals $J \in \{\widetilde{I_j}\}_{j \in \mathcal{T}'}$
with $\discr(J) \le \eps/20k$ to the LHS of (\ref{eqn:ena-b}) is at most $|\mathcal{T}'| \cdot \new{(}\eps/20k) \le  2k \cdot (\eps/20k) = \eps/10$.
Let $\mathcal{T}^{\ast}$ be the subset of $\mathcal{T}'$ corresponding to troughs with discrepancy at least $\eps/20k$, i.e.,
$j \in \mathcal{T}^{\ast}$ if and only if $j \in \mathcal{T}'$ and $ \discr(\widetilde{I_j}) \ge \eps/20k.$ Then, we have that
%\vspace{-0.2cm}
\begin{equation} \label{eqn:dyo}
\|q-p\|_1^{\mathcal{T}^{\ast}} \eqdef  \sum_{j \in \mathcal{T}^{\ast}} \discr(\widetilde{I_j}) \ge 2\eps/5.
\end{equation}
Observe that for any interval $J$ it holds $\discr(J) \le \width(J)$. \new{Note that this part of the argument depends critically on considering only troughs.} Hence, for $j \in \mathcal{T}^{\ast}$
we have that
%\vspace{-0.1cm}
\begin{equation} \label{eqn:width}
\eps/(20k) \le \width(\widetilde{I_j}) \le 1/k.
\end{equation}
Thus far we have argued that a constant fraction of the contribution to $\|q-p\|_1$ comes
from troughs whose width satisfies (\ref{eqn:width}).  Our next crucial claim is that each such trough
must have a ``large'' overlap with one of the intervals $I_i^{(j)}$ considered by our algorithm Test-Uniformity-$\mathcal{A}_k$.
In particular, consider a trough $J \in \{\widetilde{I_j}\}_{j \in \mathcal{T}^{\ast}}$.
We claim that there exists $j \in \{ 0, \ldots, j_0-1\}$ and $i \in [\ell_{j}]$
such that $| I_{i}^{(j)}| \ge |J|/\new{4}$ \new{and so that $I_i^{(j)}\subseteq J$.
To see this we first pick a $j$ so that $\width(J)/2 > 2^{-j}/k \geq \width(J)/4.$ Since the $I^{(j)}_i$ have width less than half that of $J$, $J$ must intersect at least three of these intervals. Thus, any but the two outermost such intervals will be entirely contained within $J$, and furthermore has width $2^j/k \geq \width(J)/4.$
}
%To see this we argue as follows:  Fix an arbitrary $J \in \{\widetilde{I_j}\}_{j \in \mathcal{T}^{\ast}}$. It follows from (\ref{eqn:width}) that there exists a value $j \in \{ 0, \ldots, j_0-1\}$
%such that
%$ 2^{-j-1}/k < \width(J) \ls 2^{-j}/k.$
%By construction of the algorithm there exist two consecutive intervals $ I_{i}^{(j)},  I_{i+1}^{(j)} \in \mathcal{I}^{(j)}$  at level $j$ such that $J \subseteq  I_{i}^{(j)} \cup  I_{i+1}^{(j)}$, which implies that either  $| I_{i}^{(j)} \cap J | \ge |J|/2$ or  $|I_{i+1}^{(j)} \cap J| \ge |J|/2$, establishing the claim.

%Now suppose that  $|I_{i}^{(j)} \cap J| \ge \left| J \right|/2$ for some $j \in \{ 0, \ldots, j_0-1\}$ and $i \in [\ell_{j}]$.
%Then, by construction of our sequence of interval partitions, there exists an interval $L \in \mathcal{I}^{(j+1)}$ that is contained in $J$.
%Indeed, the corresponding interval $L$ will be either  $I_{2i-1}^{(j+1)}$ or $I_{2i}^{(j+1)}$.
%Note that
%\vspace{-0.2cm}
%$$\width(L)  = 1/(k\cdot 2^{j+1}) \ge (1/2) \cdot \width(J).$$
%In other words, at least ``half of the width of $J$'' contributes to the error between $q_{r}^{\mathcal{I}^{(j+1)}}$ and $U_{\ell_{j+1}}$.

Since the interval $L \in \mathcal{I}^{(j+1)}$ is a ``domain point'' for the reduced distribution $q_{r}^{\mathcal{I}^{(j+1)}}$, the $L_1$ error between
$q_{r}^{\mathcal{I}^{(j+1)}}$ and $U_{\ell_{j+1}}$ incurred by this element is at least $\frac{1}{\new{4}}\cdot \discr(J)$, and
the corresponding $L_2^2$ error is at least $\frac{1}{\new{16}} \cdot (\discr(J))^2 \ge \frac{\eps}{\new{320}k} \cdot \discr(J),$ where the inequality follows from the fact
that $\discr(J) \ge \eps/(20k)$.
Hence, we have that
%\vspace{-0.2cm}
\begin{equation} \label{eqn:tria}
\| q_{r}^{\mathcal{I}^{(j+1)}} - U_{\ell_{j+1}} \|_2^2 \ge  \eps/(\new{320}k) \cdot \discr(J).
\end{equation}
As shown above, for every trough $J \in \{\widetilde{I_j}\}_{j \in \mathcal{T}^{\ast}}$ there exists a level $j \in \{0, \ldots, j_0-1\}$
such that (\ref{eqn:tria}) holds.  Hence, summing (\ref{eqn:tria}) over all levels we obtain
%\vspace{-0.2cm}
\begin{equation} \label{eqn:tessa}
\sum_{j=0}^{j_0-1} \| q_{r}^{\mathcal{I}^{(j+1)}} - U_{\ell_{j+1}} \|_2^2
\ge  \eps/(\new{320}k) \cdot   \sum_{j \in \mathcal{T}^{\ast}} \discr(\widetilde{I_j})
\ge   \eps^2 / (\new{800}k),
\end{equation}
where the second inequality follows from (\ref{eqn:dyo}).
Note that
$$
\sum_{j=0}^{j_0-1} \gamma_j^2 \leq \sum_{j=0}^{j_0-1} \frac{\eps^2 \cdot 2^{3j/4}}{\new{80}^2 \cdot k 2^j} = \frac{\eps^2}{\new{6400}k}\sum_{j=0}^{j_0-1} 2^{-j/4} <  \eps^2 / (\new{800}k).
$$
Therefore, by the above, we must have that
$
\| q_{r}^{\mathcal{I}^{(j+1)}} - U_{\ell_{j+1}} \|_2^2 > \gamma_j^2
$
for some $0\leq j \leq j_0-1$.
This completes the proof of Lemma~\ref{lem:structural} for the special case of $q$ being $k$-flat.
%\new{tighten argument} From this it follows that
%there exists some $j \in \{ 0, 1,  \ldots,  j_0-1 \}$ such that
%$$  \| q_{r}^{\mathcal{I}^{(j+1)}} - U_{\ell_{j+1}} \|_2^2 \ge \Omega\left(\frac{\ve^2}{k\cdot\log (\frac{1}{\ve})} \right)$$
%completing the proof of Lemma~\ref{lem:structural}.

\subsection{Proof of Structural Lemma: General Case} \label{sec:ak}

%Our uniformity testing algorithm for $\mathcal{A}_k$ distance will be identical to the one in Section~\ref{sec:main} except that the value of the error parameters $\eps_j$
%for the reduced uniformity testers in Step~3(b) will be selected to be smaller by a constant factor.
\new{To prove the general version of our structural result for the $\mathcal{A}_k$ distance, we will need to choose an appropriate value for the universal constant $C$.
We show that it is sufficient to take $C\leq 5 \cdot 10^{-6}$.}
(While we have not attempted to optimize constant factors, we believe that a more careful analysis will lead to substantially better constants.)

A useful observation is that our Test-Uniformity-$\mathcal{A}_k$ algorithm only distinguishes which of the intervals of $\mathcal{I}^{(j_0-1)}$ each of our samples lies in,
and can therefore equivalently be thought of as a uniformity tester for the reduced distribution $q_r^{\mathcal{I}^{(j_0-1)}}.$ In order to show that it suffices to consider only this restricted sample set, we claim that
$$
\|q_r^{\mathcal{I}^{(j_0-1)}} - U_{\ell_{j_0-1}}\|_{\mathcal{A}_k} \geq \|p-q\|_{\mathcal{A}_k} - \eps/2.
$$
In particular, these $\mathcal{A}_k$ distances would be equal if the dividers of the optimal partition for $q$ were all on boundaries between intervals of $\mathcal{I}^{(j_0-1)}$. If this was not the case though, we could round the endpoint of each trough inward to the nearest such boundary (note that we can assume that the optimal partition has no two adjacent troughs). This increases the discrepancy of each trough by at most $2k \cdot 2^{-j_0}$, and thus for $j_0-\log_2(1/\eps)$ a sufficiently large universal constant, the total discrepancy decreases by at most $\eps/2$.

Thus, we have reduced ourselves to the case where $n=2^{j_0-1}\cdot k$ and have argued that it suffices to show that our algorithm works to distinguish $\mathcal{A}_k$-distance in this setting with $\eps_j= 10^{-5} \cdot \eps \cdot 2^{3j/8}$.

The analysis of the completeness and the soundness of the tester is identical to Proposition~\ref{prop:ak}.
The only missing piece is the proof of Lemma~\ref{lem:structural}, which we now restate for the sake of convenience:

\begin{lemma} \label{lem:structuralAk}
If $\|q - p\|_{\mathcal{A}_k} \ge \eps$, there exists some $j \in \Z_+$ with $0 \le j \le j_0-1$ such that
$$\| q_r^{\mathcal{I}^{(j)}} - U_{\ell_j}\|^2_2 \ge {\gamma_j^2} := \eps_j^2/\ell_j = 10^{-10} 2^{-j/4} \eps^2/k.$$
\end{lemma}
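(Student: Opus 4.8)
The plan is to run the template of the $k$-flat argument in Section~\ref{ssec:kflat-struct}, keeping its first half verbatim and replacing its second half --- where $q$ being constant on a trough let us simultaneously bound $\discr$ by $\width$ and charge a trough to a single level --- by an argument that charges each trough to a whole geometric band of levels. Concretely: fix a $k$-interval partition $\mathcal{I}^\ast=(I_i^\ast)_{i=1}^k$ realizing $\|q-p\|_{\mathcal{A}_k}\ge\eps$; by Scheffe the troughs (intervals with $q(I_i^\ast)<p(I_i^\ast)$) carry at least $\eps/2$ of this; subdivide every trough until its width is $\le 1/k$ (at most doubling their number) and discard the troughs whose discrepancy is below $\eps/(40k)$ (their total contribution is $O(\eps)$ and can be absorbed). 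This produces a family $\{J\}$ of $O(k)$ disjoint intervals with $\discr(J):=p(J)-q(J)\ge\eps/(40k)$, $\width(J)=p(J)\le 1/k$, and $\sum_J\discr(J)=\Omega(\eps)$; note that $\discr(J)\le\width(J)$ still holds since $q(J)\ge 0$, so $\width(J)$ lies between $\eps/(40k)$ and $1/k$. Exactly as before, it suffices to work with these troughs because $\|q-p\|_{\mathcal{A}_k}=\|q-p\|_1^{\mathcal{T}}\cdot2$ up to the rounding already handled at the start of Section~\ref{sec:ak}.

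For one trough $J$, let $j(J)$ be the (unique, by the choice of $j_0$) level for which the cells of $\mathcal{I}^{(j(J))}$ have mass $\approx\discr(J)$. For every level $j\ge j(J)$ I would split $J$ into the cells of $\mathcal{I}^{(j)}$ that lie strictly inside it (there are $\le\width(J)\,\ell_j$ of them) plus its two fringe cells, whose combined mass is $\le 2/\ell_j\le\discr(J)$; the net discrepancy carried by the interior cells is therefore $\discr(J)$ up to at most a $2/\ell_j$ correction, hence still $\gtrsim\discr(J)$, and Cauchy--Schwarz over the interior cells gives a level-$j$ contribution $\gtrsim\discr(J)^2/(\width(J)\,\ell_j)$ to $\|q_r^{\mathcal{I}^{(j)}}-U_{\ell_j}\|_2^2$. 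The target estimate is that, after summing this geometric series over $j\ge j(J)$ (and, crucially, over coarser levels as well, where the interior cells already account for almost all of $J$), the \emph{total} contribution of $J$ across the ladder is $\gtrsim\discr(J)^2/\width(J)\gtrsim k\,\discr(J)^2$. Granting this, distinct troughs occupy disjoint families of cells at each level, so their per-trough/per-level contributions add, and
$$
\sum_{j=0}^{j_0-1}\big\|q_r^{\mathcal{I}^{(j)}}-U_{\ell_j}\big\|_2^2\ \gtrsim\ k\sum_{J}\discr(J)^2\ \ge\ k\,\frac{\big(\sum_J\discr(J)\big)^2}{\#\{J\}}\ \gtrsim\ k\cdot\frac{\eps^2}{k}\ =\ \Omega(\eps^2)/1\cdot\tfrac1k\cdot k\,,
$$
i.e.\ $\sum_j\|q_r^{\mathcal{I}^{(j)}}-U_{\ell_j}\|_2^2=\Omega(\eps^2/k)$, using Cauchy--Schwarz and $\#\{J\}=O(k)$. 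On the other hand $\sum_{j=0}^{j_0-1}\gamma_j^2=\sum_j\eps_j^2/\ell_j=C^2(\eps^2/k)\sum_j 2^{-j/4}<C^2(\eps^2/k)/(1-2^{-1/4})$, which is strictly smaller than the left side once $C$ is a sufficiently small absolute constant --- this is exactly the origin of the bound $C\le 5\cdot10^{-6}$. Hence $\|q_r^{\mathcal{I}^{(j)}}-U_{\ell_j}\|_2^2>\gamma_j^2$ for at least one $0\le j\le j_0-1$, which is the assertion of Lemma~\ref{lem:structuralAk}.

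The step I expect to be the genuine obstacle is the per-trough estimate. A naive version --- charging $J$ only to the single band $j\ge j(J)$ with the crude bound ``net interior discrepancy $\ge\discr(J)/2$'' --- yields only a contribution of order $\discr(J)^3/\width(J)$, which loses an extra factor $\discr(J)/\width(J)\ge\eps/40$ and is \emph{not} enough: the resulting lower bound $\sum_j\|\cdot\|_2^2\gtrsim\eps^3/k$ cannot beat $\sum_j\gamma_j^2=\Theta(C^2\eps^2/k)$ for a universal $C$. What makes it work is precisely the cancellation phenomenon flagged just before the $k$-flat proof combined with the fact that the optimal $\mathcal{A}_k$-partition has only $\le k$ parts: discrepancy that hides on a finer scale (where ``highs'' and ``lows'' could otherwise cancel inside one cell) is forced to be proportionally \emph{larger} there, so a trough cannot simultaneously look uniform at all scales; one must therefore (i) verify that the two fringe cells of $J$ at level $j$ never swallow more than $O(1/\ell_j)$ of $\discr(J)$, (ii) collect contributions over the full band of scales surrounding $j(J)$ so that their sum is comparable to $\discr(J)^2/\width(J)$ rather than $\discr(J)^3/\width(J)$, and (iii) check that the threshold tail $\gamma_j^2\propto 2^{-j/4}$ --- which is exactly what the exponent $\tfrac38$ in $\eps_j=C\eps\,2^{3j/8}$ produces --- is light enough to be dominated by that sum, while troughs of very different widths are charged to essentially disjoint bands so the additivity above is legitimate. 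The $k$-flat analysis is the special case in which each trough is charged to one level and the inequality $\discr(J)\le\width(J)$ alone does the work of (i)--(ii); the general case replaces it by this scale-by-scale bookkeeping, which is where the technical weight of Section~\ref{sec:ak} lies.
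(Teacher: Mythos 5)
There is a genuine gap, and it is exactly the step you yourself flag: the per-trough, multi-scale estimate is never established, and the form you state as the target is false. Even for a perfectly flat trough $J$ with $w=\width(J)$ and $d=\discr(J)$, the total contribution of the cells inside $J$ summed over all levels is $\Theta(d^2)$ (the level whose cells have width $\approx w$ already gives $\approx d^2$, and finer levels form a geometric series $\sum_j (d^2/w)\cdot 2^{-j}/k \le 2d^2$), not $\gtrsim d^2/w$; so the intermediate claim ``total contribution $\gtrsim k\,\discr(J)^2$'' cannot be right — it would force $\sum_j\|q_r^{\mathcal{I}^{(j)}}-U_{\ell_j}\|_2^2=\Omega(\eps^2)$, which fails e.g.\ for $\Theta(k)$ flat troughs of width $1/k$ and discrepancy $\eps/k$ each, where the true total is $\Theta(\eps^2/k)$ (this also makes the arithmetic in your displayed chain inconsistent). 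What would suffice is a per-trough bound of order $\discr(J)^2$, suitably weighted against the $2^{-j/4}$ thresholds; but your actual argument (fringe mass $\le 2/\ell_j$, Cauchy--Schwarz over the $\le\width(J)\ell_j$ interior cells, sum over $j\ge j(J)$) only yields $\Theta(\discr(J)^3/\width(J))$, since it produces nothing until $2/\ell_j\lesssim\discr(J)$, and, as you concede, that gives $\Omega(\eps^3/k)$ overall, which cannot dominate $\sum_j\gamma_j^2=\Theta(C^2\eps^2/k)$ for a universal $C$. The heuristic in your item (ii) is the right intuition, but nothing in the proposal converts it into an inequality: within a trough of the $\mathcal{A}_k$-optimal partition $q-p$ may change sign, and its deficiency can sit next to the trough's endpoints, so that every coarse level sees only balanced interior cells while all of the discrepancy lands in fringe cells that you have no way to charge.

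The paper closes precisely this hole with a device absent from your outline: the scale-sensitive $L_2$ distance $\|q-p\|_{[k]}^2=\max_{\mathcal{I}\in\mathbf{W}_{1/k}}\sum_i \discr^2(I_i)/\width^{1/8}(I_i)$. It is bounded below by $\|q-p\|_{\mathcal{A}_k}^2/(2k)^{7/8}$ (Lemma~\ref{AkScaledL2Lem}, the same Cauchy--Schwarz you use), but the crucial upper bound (Proposition~\ref{scaledL2IntProp}) is proved interval by interval for the \emph{maximizing} partition of this new objective rather than for the $\mathcal{A}_k$-optimal one: choosing the level whose cells have width $\approx\width(J)/10^8$, if a fringe cell $R$ had $\discr(R)>\discr(J)/3$ then splitting $R$ off would strictly increase $\sum\discr^2/\width^{1/8}$ — this is where the exponent $1/8$ enters, via $(1/3)^2\cdot(2/10^8)^{-1/8}>1$ — contradicting maximality; hence the interior cells at that single level already carry discrepancy $\ge\discr(J)/3$, so their squares sum to $\ge\discr^2(J)/(9\cdot 10^8)$. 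This extremal, self-improving choice of partition is the mechanism that replaces your missing steps (i)--(ii), and the $\width^{-1/8}$ weighting is what makes the resulting level sums, after converting to $2^{j/8}\|q_r^{\mathcal{I}^{(j)}}-U_{\ell_j}\|_2^2$, dominate the $2^{-j/4}$ thresholds. Without a substitute for this idea, your proposal reduces to the $k$-flat argument plus an unproven (and, as stated, false) estimate, so it does not constitute a proof of Lemma~\ref{lem:structuralAk}.
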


The analysis of the general case here is somewhat more complicated than the special case for $q$ being $k$-flat case that was presented in the previous section.
This is because it is possible for one of the intervals $J$ in the optimal partition (i.e., the interval partition $\mathcal{I}^{\ast} \in \mathfrak{J}_k$ maximizing
$\|q_r^{\mathcal{I}} - q_r^{\mathcal{I}}\|_1$ in the definition of the $\mathcal{A}_k$ distance) to have large overlap with an interval $I$ that our algorithm considers -- that is,
$I \in \cup_{j = 0}^{j_0-1} \mathcal{I}^{(j)}$ --
without having $q(I)$ and $p(I)$ differ substantially. Note that the unknown distribution $q$ is not guaranteed to be constant within such an interval $J$,
and in particular the difference $q-p$ does not necessarily preserve its sign within $J$.

To deal with this issue, we note that there are two possibilities for an interval $J$ in the optimal partition: Either one of the intervals $I_i^{(j)}$ (considered by our algorithm) of size at least $|J|/2$ has discrepancy comparable to $J$, or the distribution $q$ differs from $p$ even more substantially on one of the intervals separating the endpointss
of $I_i^{(j)}$ from the endpoints of $J$. Therefore, either an interval contained within this will detect a large $L_2$ error, or we will need to again pass to a subinterval.
To make this intuition rigorous, we will need a mechanism for detecting where this recursion will terminate. To handle this formally, we introduce the following definition:
\begin{definition}
For $p=U_n$ and $q$ an arbitrary distribution over $[n]$, we
define the {\em scale-sensitive-$L_2$ distance} between $q$ and $p$ to be
$$
\|q - p\|^2_{[k]} \eqdef \max_{\mathcal{I} = (I_1, \ldots, I_r) \in \mathbf{W}_{1/k}} \sum_{i=1}^r \frac{\discr^2(I_i)}{\width^{1/8}(I_i)}
$$
where $ \mathbf{W}_{1/k}$ is the collection of all interval partitions of $[n]$ into intervals of width at most $1/k$.
%to be the maximum over all partitions, $\mathcal{I}$ of $[n]$ into intervals of width at most $1/k$ of the sum over intervals $I_i$ in the partition of the squared discrepancy of $I_i$ %divided by the $1/8^{th}$ power of the width of $I_i$.
\end{definition}

% need to understand what this means

%\noindent {\em Note:} Although this is a metric we will not make use of this fact.

The notion of the scale-sensitive-$L_2$ distance will be a useful intermediate tool in our analysis.
The rough idea of the definition is that the optimal partition will be able to detect the correctly sized intervals for our tester to notice.
(It will act as an analogue of the partition into the intervals where $q$ is constant for the $k$-flat case.)

The first thing we need to show is that if $q$ and $p$ have large $\mathcal{A}_k$ distance then they also have large scale-sensitive-$L_2$ distance.
Indeed, we have the following lemma:
\begin{lemma}\label{AkScaledL2Lem}
For $p=U_n$ and $q$ an arbitrary distribution over $[n]$, we have that
$$
\|q-p\|^2_{[k]} \geq \frac{\|q-p\|_{\mathcal{A}_k}^2}{(2k)^{7/8}}.
$$
\end{lemma}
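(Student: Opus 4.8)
The plan is to exhibit a single partition $\mathcal{J}\in\mathbf{W}_{1/k}$ for which $\sum_{J\in\mathcal{J}}\discr^2(J)/\width^{1/8}(J)$ is already at least $\|q-p\|_{\mathcal{A}_k}^2/(2k)^{7/8}$; since $\|q-p\|^2_{[k]}$ is by definition the maximum of this sum over $\mathbf{W}_{1/k}$, this suffices. First I would fix a partition $\mathcal{I}^\ast=(I^\ast_1,\dots,I^\ast_r)$, which we may take to have $r\le k$ parts, attaining $\sum_{i=1}^r\discr(I^\ast_i)=\|q-p\|_{\mathcal{A}_k}=:\eta$ (the maximum defining $\|\cdot\|_{\mathcal{A}_k}$ is attained with at most $k$ parts because splitting an interval can only increase $\sum_i|q(I_i)-p(I_i)|$). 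Write $w_i:=\width(I^\ast_i)=p(I^\ast_i)$ and $d_i:=\discr(I^\ast_i)$, so that $\sum_i w_i=1$ and $\sum_i d_i=\eta$. I would then build $\mathcal{J}$ by keeping every $I^\ast_i$ with $w_i\le 1/k$ as a single part, and cutting every $I^\ast_i$ with $w_i>1/k$ into $m_i:=\lceil k w_i\rceil$ consecutive sub-intervals each of $p$-mass at most $1/k$ (this needs only that no single point carries mass more than $1/k$; after the domain-stretching reduction preceding the lemma one may simply assume $k\mid n$, so an equal-length cut works). Then $\mathcal{J}\in\mathbf{W}_{1/k}$, and the crucial observation — the source of the ``$2k$'' in the statement — is that $\mathcal{J}$ has at most $2k$ parts:
$$
|\mathcal{J}|\;=\;\#\{i:w_i\le 1/k\}+\sum_{i:w_i>1/k}\lceil k w_i\rceil\;\le\;\#\{i:w_i\le 1/k\}+\#\{i:w_i>1/k\}+k\sum_{i:w_i>1/k}w_i\;\le\;r+k\;\le\;2k.
$$

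The analysis then rests on two facts. The first is \emph{sub-additivity of discrepancy under refinement}: if an interval $I$ is split into consecutive pieces $J_1,\dots,J_m$, then $\discr(I)=\left|\sum_{\ell}(q(J_\ell)-p(J_\ell))\right|\le\sum_{\ell}\discr(J_\ell)$. This is the device that replaces ``$q$ is constant on $I^\ast_i$'' from the $k$-flat case, and it is all one needs even though $q-p$ may change sign inside $I^\ast_i$. The second is two applications of the Cauchy--Schwarz inequality in the form $\sum_\ell x_\ell^2/y_\ell\ge(\sum_\ell x_\ell)^2/\sum_\ell y_\ell$. Grouping the parts of $\mathcal{J}$ by the $I^\ast_i$ containing them, sub-additivity gives for each $i$
$$
\sum_{J\in\mathcal{J},\,J\subseteq I^\ast_i}\frac{\discr^2(J)}{\width^{1/8}(J)}\;\ge\;\frac{\left(\sum_{J\subseteq I^\ast_i}\discr(J)\right)^2}{\sum_{J\subseteq I^\ast_i}\width^{1/8}(J)}\;\ge\;\frac{d_i^2}{a_i},\qquad a_i:=\sum_{J\subseteq I^\ast_i}\width^{1/8}(J),
$$
and summing over $i$ together with a second Cauchy--Schwarz yields
$$
\|q-p\|^2_{[k]}\;\ge\;\sum_{i=1}^r\frac{d_i^2}{a_i}\;\ge\;\frac{\left(\sum_i d_i\right)^2}{\sum_i a_i}\;=\;\frac{\eta^2}{\sum_{J\in\mathcal{J}}\width^{1/8}(J)}.
$$
Finally, since $\mathcal{J}$ has $|\mathcal{J}|\le 2k$ parts of total $p$-mass $1$, concavity of $t\mapsto t^{1/8}$ (Jensen) gives $\sum_{J\in\mathcal{J}}\width^{1/8}(J)\le|\mathcal{J}|^{7/8}\le(2k)^{7/8}$, and the lemma follows.

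The step I expect to be the real content is the counting bound $|\mathcal{J}|\le 2k$: it relies on the cheap but slightly delicate fact that chopping a wide interval of mass $w_i$ into pieces of mass at most $1/k$ needs only $\lceil k w_i\rceil\le k w_i+1$ pieces, so that the ``$+1$'' per wide interval is absorbed into $\#\{i:w_i>1/k\}\le k$ while the ``$kw_i$'' terms sum to $k\sum_i w_i=k$. Everything else — the triangle inequality, the two Cauchy--Schwarz steps, and Jensen — is routine; note that the exponent $1/8$ built into the definition of $\|\cdot\|_{[k]}$ is precisely what makes $|\mathcal{J}|^{1-1/8}$ come out as $(2k)^{7/8}$ here. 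The argument uses nothing about $p$ besides $p([n])=1$ and the ability to cut an interval into pieces of small $p$-mass, so it carries over verbatim to the continuous setting.
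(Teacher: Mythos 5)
Your proof is correct and takes essentially the same route as the paper's: refine the optimal $\mathcal{A}_k$ partition into at most $2k$ intervals of width at most $1/k$ (discrepancy only grows under refinement, by the triangle inequality), then apply Cauchy--Schwarz and concavity of $t\mapsto t^{1/8}$ to get the $(2k)^{7/8}$ denominator. The only differences are expository -- the paper applies a single Cauchy--Schwarz to the refined partition and leaves the $\le 2k$ counting and sub-additivity implicit, whereas you spell these out and split Cauchy--Schwarz into two stages -- so there is nothing substantive to add.
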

\begin{proof}
Let $\eps=\|q - p\|_{\mathcal{A}_k}^2$. Consider the optimal $\mathcal{I}^{\ast}$ in the definition of the $\mathcal{A}_k$ distance. As in our analysis for the $k$-flat case,
by further subdividing intervals of width more than $1/k$ into smaller ones, we can obtain a new partition, $\mathcal{I}' = (I_i')_{i=1}^{s}$, of cardinality $s \le 2k$ all of whose parts have width at most $1/k$. Furthermore, we have that $\sum_i \discr(I'_i) \geq \eps$. Using this partition to bound from below $\|q-p\|^2_{[k]}$, by Cauchy-Schwarz we obtain that
\begin{align*}
\|q-p\|^2_{[k]} & \geq \sum_i \frac{\discr^2(I'_i)}{\width(I'_i)^{1/8}}\\
& \geq \frac{\left(\sum_i \discr(I'_i) \right)^2}{\sum_i \width(I'_i)^{1/8}}\\
& \geq \frac{\eps^2}{2k (1/(2k))^{1/8}} \\ & = \frac{\eps^2}{(2k)^{7/8}}.
\end{align*}
\end{proof}

The second important fact about the scale-sensitive-$L_2$ distance is that if it is large then one of the partitions considered in our algorithm will produce a large $L_2$ error.
\begin{proposition}\label{scaledL2IntProp}
Let $p=U_n$ be the uniform distribution and $q$ a distribution over $[n]$. Then we have that
\begin{equation}\label{scaledDistErrsEqn}
\|q-p\|_{[k]}^2 \leq 10^8 \sum_{j=0}^{j_0-1} \sum_{i=1}^{2^j \cdot k} \frac{\discr^2(I_i^{(j)})}{\width^{1/8}(I_i^{(j)})}.
\end{equation}
\end{proposition}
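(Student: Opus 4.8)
The plan is to reduce, using subadditivity of the discrepancy together with a single application of Cauchy--Schwarz, the arbitrary fine partition appearing in the definition of $\|q-p\|_{[k]}^2$ to the \emph{fixed} dyadic family $\bigcup_{j=0}^{j_0-1}\mathcal{I}^{(j)}$ of intervals that the algorithm actually uses. The two ingredients I would set up first are elementary. Writing $\discr(I)=|q(I)-p(I)|$ (which for $p=U_n$ is the quantity used throughout), if an interval $I$ is a disjoint union of intervals $I=\bigsqcup_t D_t$, then the triangle inequality gives $\discr(I)\le\sum_t\discr(D_t)$; this is the only place the structure of $q$ enters. Second, I would prove a purely combinatorial decomposition lemma: every interval $I\subseteq[n]$ with $\width(I)\le 1/k$ can be written as a disjoint union of intervals $D_1,\dots,D_s\in\bigcup_{j=0}^{j_0-1}\mathcal{I}^{(j)}$ with $\sum_{t=1}^{s}\width(D_t)^{1/8}\le c\cdot\width(I)^{1/8}$ for an absolute constant $c$ (one can take $c\le 10^8$, in fact far less). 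This is the standard dyadic decomposition: since $\width(I)$ is at most the width $1/k$ of a level-$0$ interval, $I$ meets at most two of the $k$ level-$0$ intervals, so it splits into at most two sub-intervals, each contained in a single binary subtree of the dyadic family; within one subtree the canonical decomposition of a sub-interval uses at most two dyadic intervals of each width, all of these widths being $\le\width(I)$ and decreasing geometrically, so $\sum_t\width(D_t)^{1/8}$ is bounded by a convergent geometric series times $\width(I)^{1/8}$. Every piece has width between $1/n$ and $1/k$, hence lies in $\bigcup_{j=0}^{j_0-1}\mathcal{I}^{(j)}$.

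With these in hand, I would fix an arbitrary partition $\mathcal{I}=(I_1,\dots,I_r)\in\mathbf{W}_{1/k}$ and decompose each $I_i=\bigsqcup_t D_{i,t}$ as in the lemma. Combining the triangle inequality for $\discr(I_i)$ with Cauchy--Schwarz applied to the factorization $\discr(D_{i,t})=\dfrac{\discr(D_{i,t})}{\width(D_{i,t})^{1/16}}\cdot\width(D_{i,t})^{1/16}$ gives
\[
\frac{\discr^2(I_i)}{\width^{1/8}(I_i)}\;\le\;\frac{\sum_t\width^{1/8}(D_{i,t})}{\width^{1/8}(I_i)}\cdot\sum_t\frac{\discr^2(D_{i,t})}{\width^{1/8}(D_{i,t})}\;\le\;c\sum_t\frac{\discr^2(D_{i,t})}{\width^{1/8}(D_{i,t})},
\]
using the decomposition lemma in the last step. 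Because the $I_i$ are pairwise disjoint and each $D_{i,t}\subseteq I_i$, all intervals $D_{i,t}$ occurring over all $i$ and $t$ are pairwise distinct elements of $\bigcup_{j=0}^{j_0-1}\mathcal{I}^{(j)}$; since every summand is nonnegative, summing the displayed inequality over $i$ yields $\sum_i\frac{\discr^2(I_i)}{\width^{1/8}(I_i)}\le c\sum_{j=0}^{j_0-1}\sum_{i=1}^{2^j k}\frac{\discr^2(I_i^{(j)})}{\width^{1/8}(I_i^{(j)})}$. Taking the maximum over $\mathcal{I}\in\mathbf{W}_{1/k}$ gives exactly \eqref{scaledDistErrsEqn} with $c\le 10^8$.

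The only step that needs genuine care is the decomposition lemma, specifically verifying that $\sum_t\width(D_{i,t})^{1/8}$ stays within a constant factor of $\width(I_i)^{1/8}$; this becomes routine once one invokes the ``at most two dyadic intervals per scale'' structure of the canonical decomposition and sums the resulting geometric series. The constant $10^8$ is deliberately wasteful — any exponent in $(0,1]$ in place of $1/8$ would work here, only the value of $c$ changing — and matching it is trivial. Finally, I would remark that chaining this proposition with Lemma~\ref{AkScaledL2Lem} reduces Lemma~\ref{lem:structuralAk} to the same kind of bookkeeping on the $\gamma_j$'s as in the $k$-flat case: from $\|q-p\|_{\mathcal{A}_k}\ge\eps$ one gets $\sum_{j}\sum_i\discr^2(I_i^{(j)})/\width^{1/8}(I_i^{(j)})\gtrsim \eps^2/(2k)^{7/8}$, each term of which (since $\width(I_i^{(j)})=2^{-j}/k$) contributes to $\|q_r^{\mathcal{I}^{(j)}}-U_{\ell_j}\|_2^2$ with the weight $k^{1/8}2^{-j/8}$, so a level $j$ with $\|q_r^{\mathcal{I}^{(j)}}-U_{\ell_j}\|_2^2>\gamma_j^2$ must exist provided $\sum_j k^{-1/8}2^{j/8}\gamma_j^2$ is a small enough multiple of $\eps^2/(2k)^{7/8}$, which is exactly how $\eps_j=C\eps\,2^{3j/8}$ is calibrated.
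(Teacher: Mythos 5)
Your proof is correct, but it takes a genuinely different route from the paper's. The paper only establishes the bound for the \emph{optimal} partition $\mathcal{J}$ in the definition of $\|q-p\|_{[k]}$, and optimality is used in an essential way: for each $J\in\mathcal{J}$ with $|J|>10^8$ it picks a single scale $j$ with $1/(2^jk)\approx\width(J)/10^8$, covers $J$ by the level-$j$ intervals inside it plus two end-pieces $R_1,R_2$, and argues $\discr(R_i)\le\discr(J)/3$ because otherwise replacing $J$ by $R_i$ and its complement would increase the maximized sum (the small-$|J|$ case is handled via the singleton level $j_0-1$). You instead prove the stronger, partition-uniform statement for \emph{every} $\mathcal{I}\in\mathbf{W}_{1/k}$, via an exact segment-tree decomposition of each part into intervals of the algorithm's dyadic family (at most two per scale, widths decaying geometrically) followed by one Cauchy--Schwarz; this avoids any appeal to optimality, uses many scales per interval rather than one, and yields a much better constant (on the order of $50$ rather than $10^8$, since $\sum_t\width^{1/8}(D_t)\le 4\width^{1/8}(I)/(1-2^{-1/8})$). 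One point you should make explicit: your exact decomposition needs the finest level $\mathcal{I}^{(j_0-1)}$ to consist of singletons, i.e.\ the reduction to $n=2^{j_0-1}k$ made just before the proposition; this is legitimate in context (the paper's own $|J|\le10^8$ case uses the same fact), but without it an interval not aligned with the finest grid cannot be written exactly as a union of family members. Finally, in your closing aside about recovering Lemma~\ref{lem:structuralAk}, the weight relating level $j$ to the scale-sensitive sum is $(2^jk)^{1/8}=k^{1/8}2^{j/8}$, not $k^{1/8}2^{-j/8}$ (so the calibration condition reads $\sum_j k^{1/8}2^{j/8}\gamma_j^2\ll\eps^2/(2k)^{7/8}$); this is a typo in a remark, not a gap in the proof of the proposition, and the calibration of $\eps_j=C\eps 2^{3j/8}$ then goes through exactly as in the paper.
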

\begin{proof}
Let $\mathcal{J}  \in \mathbf{W}_{1/k}$ be the optimal partition used when computing the scale-sensitive-$L_2$ distance $\|q-p\|_{[k]}$. In particular, it is a partition into intervals of width at most $1/k$ so that $\sum_i \frac{\discr^2(J_i)}{\width(J_i)^{1/8}}$ is maximized. To prove Equation \eqref{scaledDistErrsEqn}, we prove a notably stronger claim. In particular, we will prove that for each interval $J_\ell\in\mathcal{J}$
\begin{equation}\label{refinedSumEqn}
\frac{\discr^2(J_\ell)}{\width^{1/8}(J_\ell)} \leq 10^8\sum_{j=0}^{j_0-1} \sum_{i: I_i^{(j)}\subset J_\ell} \frac{\discr^2(I_i^{(j)})}{\width^{1/8}(I_i^{(j)})}.
\end{equation}
Summing over $\ell$ would then yield $\|q-p\|_{[k]}^2$ on the left hand side and a strict subset of the terms from Equation \eqref{scaledDistErrsEqn} on the right hand side. From here on, we will consider only a single interval $J_\ell$. For notational convenience, we will drop the subscript and merely call it $J$.

First, note that if $|J|\leq 10^8$, then this follows easily from considering just the sum over $j=j_0-1$. Then, if $t=|J|$, $J$ is divided into $t$ intervals of size one. The sum of the discrepancies of these intervals equals the discrepancy of $J$, and thus, the sum of the squares of the discrepancies is at least $\discr^2(J)/t$. Furthermore, the widths of these subintervals are all smaller than the width of $J$ by a factor of $t$. Thus, in this case the sum of the right hand side of Equation \eqref{refinedSumEqn} is at least $1/t^{7/8}\geq \frac{1}{10^7}$ of the left hand side.

Otherwise, if $|J|>10^8$, we can find a $j$ so that $\width(J)/10^8 < 1/(2^j\cdot k) \leq 2\cdot \width(J)/10^8$. We claim that in this case Equation \eqref{refinedSumEqn} holds even if we restrict the sum on the right hand side to this value of $j$. Note that $J$ contains at most $10^8$ intervals of $\mathcal{I}^{(j)}$, and that it is covered by these intervals plus two narrower intervals on the ends. Call these end-intervals $R_1$ and $R_2$. We claim that $\discr(R_i)\leq \discr(J)/3$. This is because otherwise it would be the case that
$$
\frac{\discr^2(R_i)}{\width^{1/8}(R_i)} > \frac{\discr^2(J)}{\width^{1/8}(J)}.
$$
(This is because $(1/3)^2\cdot (2/10^8)^{-1/8} > 1$.)
This is a contradiction, since it would mean that partitioning $J$ into $R_i$ and its complement would improve the sum defining $\|q-p\|_{[k]}$, which was assumed to be maximum. This in turn implies that the sum of the discrepancies of the $I_i^{(j)}$ contained in $J$ must be at least $\discr(J)/3$, so the sum of their squares is at least $\discr^2(J)/(9\cdot 10^8)$. On the other hand, each of these intervals is narrower than $J$ by a factor of at least $10^8/2$, thus the appropriate sum of $\frac{\discr^2(I_i^{(j)})}{\width^{1/8}(I_i^{(j)})}$ is at least $\frac{\discr^2(J)}{10^8\width^{1/8}(J)}$. This completes the proof.
\end{proof}

We are now ready to prove Lemma \ref{lem:structuralAk}.
\begin{proof}
If $\|q-p\|_{\mathcal{A}_k}\geq \eps$ we have by Lemma \ref{AkScaledL2Lem} that
$$\|q-p\|_{[k]}^2 \geq \frac{\eps^2}{(2k)^{7/8}}.$$ By Proposition \ref{scaledL2IntProp}, this implies that
\begin{align*}
\frac{\eps^2}{(2k)^{7/8}} &\leq 10^8\sum_{j=0}^{j_0-1} \sum_{i=1}^{2^j\cdot k} \frac{\discr^2(I_i^{(j)})}{\width^{1/8}(I_i^{(j)})}\\
& = 10^8 \sum_{j=0}^{j_0-1} (2^{j}k)^{1/8} \|q^{\mathcal{I}^{(j)}}-U_{\ell_j}\|_2^2.
\end{align*}
Therefore,
\begin{equation}\label{L2SumEqn}
\sum_{j=0}^{j_0-1} 2^{j/8} \|q^{\mathcal{I}^{(j)}}-U_{\ell_j}\|_2^2 \geq 5\cdot 10^{-9} \eps^2/k.
\end{equation}
On the other hand, if $\|q^{\mathcal{I}^{(j)}}-U_{\ell_j}\|_2^2$ were at most $10^{-10}2^{-j/4}\eps^2/k$ for each $j$, then the sum above would be at most
$$
10^{-10}\eps^2/k \sum_j 2^{-j/8} < 5\cdot 10^{-9} \eps^2/k.
$$
This would contradict Equation \eqref{L2SumEqn}, thus proving that $\|q^{\mathcal{I}^{(j)}}-U_{\ell_j}\|_2^2\geq10^{-10}2^{-j/4}\eps^2/k$ for at least one $j$,
proving Lemma~ \ref{lem:structuralAk}.
\end{proof}

\section{Conclusions and Future Work} \label{sec:concl}

In this work we designed a computationally efficient algorithm
for the problem of identity testing against a known distribution, which yields sample optimal bounds
for a wide range of natural and important classes of structured distributions.
A natural direction for future work is to generalize our results to the problem of identity testing between two unknown structured distributions.
What is the optimal sample complexity in this more general setting?
We emphasize that new ideas are required for this problem,
as the algorithm and analysis in this work crucially exploit the a priori knowledge of the explicit distribution.

\bibliographystyle{alpha}

%\nocite{}

\bibliography{allrefs}

%\newpage

\appendix

\section*{Appendix: Omitted Proofs} \label{sec:ap}

\section{A Useful Primitive: Testing Uniformity in $L_2$ norm} \label{sec:unif-L2}

In this section, we give an algorithm for uniformity testing with respect to the $L_2$ distance, thereby establishing Theorem~\ref{thm:unif_l2-delta}.
%Let $q$ be any distribution over $[n]$.
The algorithm Test-Uniformity-$L_2 (q, n, \eps)$ described below
draws $O(\sqrt{n}/\eps^2)$ samples from a distribution $q$ over $[n]$ and distinguishes between the cases that $q = U_n$ versus $\|q-U_n\|_2 > \eps/\sqrt{n}$ with probability at least $2/3$.
Repeating the algorithm $O(\log(1/\delta))$ times and taking the majority answer results in a confidence probability of $1-\delta$, giving the desired algorithm
Test-Uniformity-$L_2 (q, n, \eps, \delta)$ of Theorem~\ref{thm:unif_l2-delta}.

Our estimator is a  variant of Pearson's chi-squared test~\cite{Pearson1900}, and can be viewed as a special case of the recent ``chi-square type'' testers
in~\cite{CDVV14, VV14}. We remark that, as follows from the Cauchy-Schwarz inequality, the same estimator distinguishes the uniform distribution
from any distribution $q$ such that $\|q-U_n\|_1 > \eps$, i.e., algorithm  Test-Uniformity-$L_2 (q, n, \eps)$ is an optimal uniformity tester for the $L_1$ norm.
The $L_2$ guarantee we prove here is \new{new}, is strictly stronger \new{than the aforementioned $L_1$ guarantee,} and is crucial for our purposes in Section~\ref{sec:main}.

For $\lambda \ge 0$, we denote by $\Poi(\lambda)$ the Poisson distribution with parameter
$\lambda.$ In our algorithm below, we employ the standard ``Poissonization'' approach: namely, we assume that, rather than drawing $m$ independent samples from a distribution,  we first select $m'$ from $\Poi(m)$, and then draw $m'$ samples.
This Poissonization makes the number of times different elements occur in the sample independent,
with the distribution of the number of occurrences of the $i$-th domain element
distributed as $\Poi(mq_i)$, simplifying the analysis.  As $\Poi(m)$ is tightly concentrated about $m$, we can carry out this Poissonization trick without loss of generality at the expense of only sub-constant factors in the sample complexity.

\medskip
\fbox{\parbox{6in}{

{\bf Algorithm} Test-Uniformity-$L_2 (q, n, \eps)$ \\
Input: sample access to a distribution $q$ over $[n]$, and $\eps>0$.\\
Output: ``YES'' if $q = U_n$; ``NO'' if $\|q-U_n\|_2 \ge \eps/\sqrt{n}.$
\begin{enumerate}
  \item Draw $m' \sim \Poi(m)$ iid samples from $q$.
  \item Let $X_i$ be the number of occurrences of the $i$th domain elements in the sample from $q$
  \item Define $Z=\sum_{i=1}^n(X_i-m/n)^2-X_i.$
  \item If $Z \ge 4m/\sqrt{n}$ return ``NO''; otherwise, return ``YES''.
\end{enumerate}
}}
\medskip

The following theorem characterizes the performance of the above estimator:

\begin{theorem} \label{thm:l2-unif}
For any distribution $q$ over $[n]$ the above algorithm distinguishes the case that $q = U_n$ from the case that $||q-U_n||_2 \ge \eps/\sqrt{n}$ when given $O(\sqrt{n}/\eps^2)$ samples from $q$  with probability at least $2/3$.
\end{theorem}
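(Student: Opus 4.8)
The plan is to analyze the statistic $Z$ through its first two moments under Poissonization, and then conclude by Chebyshev's inequality separately in the completeness and soundness cases. Concretely, I would draw $m' \sim \Poi(m)$ samples with $m = C\sqrt{n}/\eps^2$ for a large absolute constant $C$, so that the counts $X_i$ become independent with $X_i \sim \Poi(\lambda_i)$, $\lambda_i := m q_i$; since $\Poi(m)$ concentrates tightly around $m$, this Poissonization costs only constant factors in the sample size. Writing $a := m/n$, a direct computation with the factorial moments of the Poisson law (so that $\E[X(X-1)\cdots(X-r+1)] = \lambda^r$) gives, for each coordinate, $\E[(X_i - a)^2 - X_i] = (\lambda_i - a)^2$ and $\Var[(X_i - a)^2 - X_i] = 4\lambda_i(\lambda_i - a)^2 + 2\lambda_i^2$. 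Summing over $i$ and using independence yields the clean identities
$$\E[Z] = m^2 \|q - U_n\|_2^2 \qquad\text{and}\qquad \Var[Z] = 4m^3\sum_{i=1}^n q_i\Big(q_i - \tfrac1n\Big)^2 + 2m^2\|q\|_2^2 .$$

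Completeness is then immediate: if $q = U_n$ we get $\E[Z] = 0$ and $\Var[Z] = 2m^2/n$, so by Chebyshev $\Pr[Z \ge 4m/\sqrt{n}] \le (2m^2/n)/(16m^2/n) = 1/8$. For soundness, set $D := \|q - U_n\|_2^2 \ge \eps^2/n$, so that $\E[Z] = m^2 D \ge C^2/\eps^2$, which exceeds twice the threshold $4m/\sqrt n = 4C/\eps^2$ once $C \ge 8$; hence $\E[Z] - 4m/\sqrt n \ge \E[Z]/2$. It then suffices to show $\Var[Z] \le \tfrac13(\E[Z]/2)^2$, since Chebyshev gives $\Pr[Z < 4m/\sqrt n] \le \Var[Z]/(\E[Z]-4m/\sqrt n)^2 \le 1/3$.

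The heart of the argument is the bound on $\sum_i q_i(q_i - 1/n)^2$ in $\Var[Z]$. Putting $d_i := q_i - 1/n$ and using $\sum_i d_i = 0$, one has $\sum_i q_i d_i^2 = \sum_i d_i^3 + \tfrac1n\sum_i d_i^2$; since $\max_i |d_i| \le \big(\sum_j d_j^2\big)^{1/2} = \sqrt{D}$, this is at most $\big(\max_i|d_i|\big)\sum_i d_i^2 + D/n \le D^{3/2} + D/n$. One also checks $\|q\|_2^2 = D + 1/n$. Substituting these into the variance formula and using $D \ge \eps^2/n$ and $m = C\sqrt n/\eps^2$, every resulting term in $\Var[Z]\big/(\E[Z]/2)^2$ is $O(1/C)$ — for instance the contribution of the term $4m^3 D^{3/2}$ is $16/(m\sqrt D) \le 16/C$ because $m\sqrt D \ge (C\sqrt n/\eps^2)(\eps/\sqrt n) = C/\eps \ge C$ — so choosing $C$ a large enough constant yields $\Var[Z] \le \tfrac13(\E[Z]/2)^2$, completing the proof. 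Repeating $O(\log(1/\delta))$ times and taking a majority vote then boosts the confidence to $1-\delta$, as noted before the statement.

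The step I expect to be the main obstacle is precisely this variance bound. The naive estimate $\sum_i q_i(q_i-1/n)^2 \le \max_i d_i^2 \le D$ is not good enough: it would make $\sqrt{\Var[Z]}$ exceed $\E[Z]$ by a factor growing like $n^{1/4}$, destroying the $\sqrt n$ sample complexity. Squeezing out the extra $\sqrt D$ factor via $\sum_i|d_i|^3 \le \sqrt D \sum_i d_i^2$ — i.e.\ recognizing that the quantity controlling the variance is genuinely cubic in the deviation $q - U_n$ rather than quadratic — is what makes the $L_2$ guarantee hold with the optimal number of samples, and is the essential point distinguishing this analysis from the standard $L_1$ one.
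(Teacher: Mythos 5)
Your proposal is correct and follows essentially the same route as the paper: Poissonization of the same chi-square-type statistic, the threshold $4m/\sqrt{n}$, Chebyshev in both directions, and the key comparison $\Var[Z]\ll \E[Z]^2$ driven by bounding the cubic term $\sum_i|q_i-1/n|^3$ by $\|q-U_n\|_2\cdot\|q-U_n\|_2^2$. The only difference is bookkeeping — you consolidate the variance as $4m^3\sum_i q_i(q_i-\tfrac1n)^2+2m^2\|q\|_2^2$ and bound two grouped terms, whereas the paper expands it into five terms and bounds each against $\E[Z]^2$ — which does not change the substance of the argument.
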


\begin{proof}

%We will use Chebyshev's inequality to prove the correctness of the algorithm. Note that the tester function $Z$ is a random variable. Let us calculate it's expectation and variance.
%To simplify the calculations we will use the standard "poissonization" trick. That is, instead of drawing $m$ samples from a distribution we will be drawing $Poi(m)$. This trick makes all the random variables $X_i$ independent of each other.

Define $Z_i = (X_i-m/n)^2-X_i$. Since $X_i$ is distributed as $\Poi(mq_i)$, $\E[Z_i] = m^2 \Delta_i^2$, where $\Delta_i := 1/n-q_i.$
By linearity of expectation we can write $\E[Z] = \sum_{i=1}^n \E[Z_i] = m^2 \cdot \sum_{i=1}^n \Delta_i^2.$ Similarly we can calculate
$$ \var[Z_i] = 2m^2(\Delta_i - 1/n)^2 + 4m^3(1/n - \Delta_i)\Delta_i^2.$$ Since the $X_i$'s (and hence the $Z_i$'s) are independent, it follows that
$\var[Z] = \sum_{i=1}^n \var[Z_i].$

We start by establishing completeness.
Suppose $q = U_n$. We will show that $\Pr [Z \ge 4m/\sqrt{n}] \le 1/3.$
Note that in this case $\Delta_i = 0$ for all $i \in [n]$, hence $ \E[Z] = 0$ and
$ \var[Z] = 2m^2/n.$ Chebyshev's inequality implies that
$$ \Pr [Z \ge 4m/\sqrt{n}] = \Pr\left[ Z \ge (2\sqrt{2}) \sqrt{\var[Z]}\right] \le (1/8) < 2/3$$
as desired.

We now proceed to prove soundness of the tester.
Suppose that  $\|q - U_n \|_2 \ge \frac{\eps}{\sqrt{n}}$.
In this case we will show that $\Pr [Z \le 4m/\sqrt{n}] \le 1/3.$
Note that Chebyshev's inequality implies that
$$ \Pr \left[Z \le \E[Z] - 2\sqrt{\var[Z]} \right] \le 1/4.$$
It thus suffices to show that  $\E[Z] \ge 8m/\sqrt{n}$ and $\E[Z]^2 \ge 16\Var[Z].$
Establishing the former inequality is easy. Indeed,
$$\E[Z] = m^2 \cdot \|q-U_n\|_2^2 \ge m^2 \cdot (\eps^2/n) \ge 8m/\sqrt{n}$$
for $m \ge 8\sqrt{n}/\eps^2.$

Proving the latter inequality requires a more detailed analysis. We will show that
for a sufficiently large constant $C>0$, if $m \ge C \sqrt{n}/\eps^2$ we will have
$$ \Var[Z] \ll \E[Z]^2.$$
%we can guarantee that if the square root of the variance is much smaller than the value of expectation, then the error probability will be small enough.
%In other words, we need to solve the inequality:
%$$ \var[Z(m)] < (\E[Z])^2 $$
%with respect to $m$.
Ignoring multiplicative constant factors, we equivalently need to show that
$$m^2 \cdot \left( \sum_{i=1}^n \left(\Delta_i^2 - 2\Delta_i/n\right)+1/n \right)  + m^3 \sum_{i=1}^n \left(\Delta_i^2/n + \Delta_i^3\right) \ll m^4 \left( \sum_{i=1}^n \Delta_i^2 \right)^2.$$
To prove the desired inequality, it suffices to bound from above the absolute value of each of the five terms of the LHS separately.
For the first term we need to show that
$ m^2 \cdot \sum_{i=1}^n \Delta_i^2 \ll m^4 \cdot \left( \sum_{i=1}^n \Delta_i^2 \right)^2$ or equivalently
\begin{equation} \label{eqn:one}
m \gg 1/\|q-U_n\|_2.
\end{equation}
Since $\|q-U_n\|_2 \ge \eps/\sqrt{n}$, the RHS of (\ref{eqn:one}) is bounded from above by $\sqrt{n}/\eps$,
hence (\ref{eqn:one}) holds true for our choice of $m$.

For the second term we want to show that
$\sum_{i=1}^n |\Delta_i|  \ll m^2 n \cdot \left( \sum_{i=1}^n \Delta_i^2 \right)^2$. Recalling that
$ \sum_{i=1}^n |\Delta_i|  \le \sqrt{n} \cdot \sqrt{ \sum_{i=1}^n \Delta_i^2}$, as follows from the Cauchy-Schwarz inequality,
it suffices to show that $m^2 \gg (1/\sqrt{n}) \cdot 1/(\sum_{i=1}^n \Delta_i^2)^{3/2}$ or equivalently
\begin{equation} \label{eqn:two}
m \gg \frac{1}{n^{1/4}} \cdot \frac{1}{\|q-U_n\|^{3/2}_2}.
\end{equation}
Since $\|q-U_n\|_2 \ge \eps/\sqrt{n}$, the RHS of (\ref{eqn:two}) is bounded from above by $\sqrt{n}/\eps^{3/2}$,
hence (\ref{eqn:two}) is also satisfied.

%$$ \frac{\sqrt{n}\sqrt{ \sum_{i=1}^n \Delta_i^2 }}{m^2n} \ls \left( \sum_{i=1}^n \Delta_i^2 \right)^2 $$
%$$ \frac{1}{m^2\sqrt{n}} \ls \left( \sqrt{ \sum_{i=1}^n \Delta_i^2 } \right)^{3} $$
%$$ \frac{1}{m^2\sqrt{n}} \ls \left( \frac{\ve}{\sqrt{n}} \right)^{3} $$
%$$ m \gs \frac{\sqrt{n}}{\ve^{3/2}} $$

For the third term we want to argue that $m^2/n \ll m^4 \cdot \left( \sum_{i=1}^n \Delta_i^2 \right)^2$ or
\begin{equation} \label{eqn:three}
m \gg \frac{1}{n^{1/2}} \cdot \frac{1}{\|q-U_n\|^{2}_2},
\end{equation}
which holds for our choice of $m$, since the RHS is bounded from above by $\sqrt{n}/\eps^2$.
%$$ \frac{1}{nm^2} \ls  \frac{\ve^4}{n^2}  $$
%$$ m \gs \frac{1}{\ve^2} $$

Bounding the fourth term amounts to showing that $ (m^3/n) \sum_{i=1}^n \Delta_i^2  \ll m^4 \left( \sum_{i=1}^n \Delta_i^2 \right)^2$
which can be rewritten as
\begin{equation} \label{eqn:four}
m \gg \frac{1}{n} \cdot \frac{1}{\|q-U_n\|^{2}_2},
\end{equation}
and is satisfied since the RHS is at most $1/\eps^2.$

%$$ \frac{1}{nm}\sum_{i=1}^n \Delta_i^2 \ls  \left( \sum_{i=1}^n \Delta_i^2 \right)^2 $$
%$$ \frac{1}{nm} \ls \frac{\ve^2}{n}$$
%$$ m \gs \frac{1}{\ve^2}$$

Finally, for the fifth term we want to prove that
$m^3 \cdot \sum_{i=1}^n |\Delta_i|^3 \ll m^4 \cdot \left( \sum_{i=1}^n \Delta_i^2 \right)^2$ or
that $\sum_{i=1}^n |\Delta_i|^3 \ll m \cdot \left( \sum_{i=1}^n \Delta_i^2 \right)^2.$
From Jensen's inequality it follows that $\sum_{i=1}^n |\Delta_i|^3 \le \left( \sum_{i=1}^n \Delta_i|^2 \right)^{3/2}$;
hence, it is sufficient to show that $\left( \sum_{i=1}^n \Delta_i|^2 \right)^{3/2}   \ll m \cdot \left( \sum_{i=1}^n \Delta_i^2 \right)^2$
or
\begin{equation} \label{eqn:five}
m \gg \frac{1}{\|q-U_n\|_{2}}.
\end{equation}
Since $\|q-U_n\|_{2} \ge \eps/\sqrt{n}$ the above RHS is at most $\sqrt{n}/\eps$ and (\ref{eqn:five}) is satisfied.
This completes the soundness proof and the proof of Theorem~\ref{thm:l2-unif}.
\end{proof}

\end{document}